
\documentclass[a4paper,draft=true,DIV=classic,fontsize=10pt]{scrartcl}
\linespread{1.1}
\KOMAoptions{DIV=15}

\usepackage[english]{babel}
\usepackage[left=3.5cm,top=3cm,right=3.5cm,bottom=5cm]{geometry}

\usepackage{amsmath,amssymb,amsthm}
\usepackage{enumerate}
\usepackage{graphicx}
\usepackage{subfigure}
\usepackage{arydshln}


\newtheorem{theorem}{Theorem}[]

\newtheorem{corollary}[theorem]{Corollary}
\newtheorem{assumption}{Assumption}
\newtheorem{proposition}{Proposition}

\newtheorem{defintion}{Defintion}[]

\newtheorem*{LNassumption}{Leverage Neutrality Assumption}
\newtheorem*{Hassumption}{$H$-Assumption}

\begin{document}
\title{Theoretical and empirical analysis of trading activity\thanks{The authors acknowledge support by the Vienna Science and Technologie Fund (WWTF) through project MA14-008. M. Pohl and W. Schachermayer are furthermore supported by the Austrian Science Fund (FWF) under the grants P25815 and P28861.
W. Schachermayer additionally appreciates support by the WWTF project MA16-021.}}
\author{Mathias Pohl\thanks{University of Vienna, Faculty of Business, Economics \& Statistics, Oskar-Morgenstern-Platz 1, 1090 Vienna, Austria, mathias.pohl@univie.ac.at} \and Alexander Ristig\thanks{University of Vienna, Faculty of Mathematics and Faculty of Business, Economics \& Statistics, Oskar-Morgenstern-Platz 1, 1090 Vienna, Austria, alexander.ristig@univie.ac.at} \and Walter Schachermayer\thanks{University of Vienna, Faculty of Mathematics, Oskar-Morgenstern-Platz 1, 1090 Vienna, Austria, walter.schachermayer@univie.ac.at} \and Ludovic Tangpi\thanks{Princeton University, Department of Operations Research and Financial Engineering, Sherrerd Hall 203, NJ 08544 Princeton, United States of America, ludovic.tangpi@princeton.edu}}
\date{\today}

\maketitle
\vspace{-1cm}
\begin{center}
Dedicated to Georg Pflug
\end{center}
 \begin{abstract}

  \noindent 
 {\textbf{Abstract}.}
Understanding the structure of financial markets deals with suitably determining the functional relation between financial variables. In this respect, important variables are the trading activity, defined here as the number of trades $N$, the traded volume $V$, the asset price $P$, the squared volatility $\sigma^2$, the bid-ask spread $S$ and the cost of trading $C$. Different reasonings result in simple proportionality relations (``scaling laws'') between these variables. A basic proportionality is established between the trading activity and the squared volatility, i.e., $N \sim \sigma^2$. More sophisticated relations are the so called 3/2-law $N^{3/2} \sim \sigma P V /C$ and the intriguing scaling $N \sim (\sigma P/S)^2$. We prove that these ``scaling laws'' are the only possible relations for considered sets of variables by means of a well-known argument from physics: dimensional analysis. Moreover, we provide empirical evidence based on data from the NASDAQ stock exchange showing that the sophisticated relations hold with a certain degree of universality. Finally, we discuss the time scaling of the volatility $\sigma$, which turns out to be more subtle than one might naively expect.

\end{abstract}

\section{Introduction}

Understanding the structure of financial markets is of obvious relevance for traders, investors and regulators. Among others, the relation between trading activity and price variability received a lot of attention in the financial literature over the last five decades. The  pioneers of this field, e.g.~Clark~\cite{clark1973subordinated}, Epps and Epps~\cite{epps1976stochastic} and Tauchen and Pitts~\cite{tauchen1983price}, defined trading activity via trading volume and derived a proportionality relation between the trading volume and the price variability. The rationale behind this definition and the implied relation is the widely-cited aphorism, ``it takes volume to move prices''. We refer to Karpoff~\cite{karpoff1987relation} for a survey of these early works on the \emph{price-volume relation}. 

Due to minor empirical evidence for the hypotheses developed in these early approaches, the volume-based definition of trading activity has been replaced by the number of trades. This definition is caused by a substantial link between the observed price variability and the number of trades (see Jones et al.~\cite{jones1994transactions}, An\'{e} and Geman~\cite{ane2000order} as well as Dufour and Engle~\cite{dufour2000time}). For example, Jones et al.~\cite{jones1994transactions} find no predictive power in the volume for the price variability but that the number of trades scales proportionally to the squared volatility. This scaling relation will be the starting point of our discussion. Building on the aforementioned ideas numerous other studies followed, e.g.~\cite{andersen1996return,liesenfeld2001generalized}. In particular, let us point out the contribution by Wyart et al.~\cite{wyart2008relation}, who argue that the price volatility per trade, i.e., (price) $\times$ (volatility) $\times$ (number of trades)$^{-1/2}$, is proportional to the bid-ask-spread. This connection can be seen as a somewhat refined version of the relation proposed by Jones et al.~\cite{jones1994transactions}.

More recently, general relations between financial quantities have been derived based on the invariance of markets' microstructure, see Kyle and Obizhaeva~\cite{kyle2016marketempirical}. In particular, the authors postulate a \emph{trading invariance principle} which (in contrast to the above relations) is formulated on the latent level of \emph{meta-orders}.\footnote{A meta-order, also referred to as \emph{bet}, is a collection of trades originating from the same trading decision of a single investor.} Andersen et al.~\cite{andersen2016intraday} and Benzaquen et al.~\cite{benzaquen2016unravelling} confirm empirically that an analogue of this invariance principle holds true for intradaily observable quantities. The fundamental relation may then be formulated  as follows: the nominal value of the exchanged risk during a period of time, defined as the product  (volatility) $\times$ (traded volume) $\times$ (price), is proportional to the number of trades to the power  $3/2$. This so called \emph{intraday trading invariance principle} and its connection to the relations proposed by Jones et al.~\cite{jones1994transactions} and Wyart et al.~\cite{wyart2008relation} is the focus of the present paper. 
\vspace{5mm}

Our aim is to critically analyze these three relations as well as variants thereof by applying a method well known from physics: dimensional analysis. It is a tool which allows for the \emph{falsification} of a proposed relation, e.g.~of the above mentioned formulas for the number of trades, but not for its \emph{verification}. This principle is similar in spirit to K. Popper's approach to epistemology which in turn is inspired by the classical theory of statistics: There one can possibly reject a null hypothesis, but never prove it.
Similarly, dimensional analysis can only isolate those functional relations between variables involving certain ``dimensions'' which do not violate the obvious scaling invariance of these dimensions.
Hence, it a priori rules out those functional relations which are in conflict with these scaling requirements.
But this does \emph{not} imply that the identified functional relations, which are in accordance with the scaling requirements, describe the reality in a reasonable way.
This has to be confirmed by other methods.
In the present setting the ultimate challenge is, of course, to fit to empirical data. 
To complete the picture, we perform an empirical analysis of the relations described above and show that the \emph{intraday trading invariance principle} provides an appropriate fit to empirical data, but fails to be a ``universal law''.

In dimensional analysis one uses the rather obvious argument that a meaningful relation between quantities involving some ``dimensions'' should not be affected by the units in which these ``dimensions'' are measured.
In the present context the relevant ``dimensions'' are time, shares, and money, denoted as $\mathbb{T}, \mathbb{S}$ and $\mathbb{U}$, respectively.
We shall also use an additional argument, namely ``leverage neutrality'' as introduced by Kyle and Obizhaeva~\cite{kyle2017dimensional}. 
We emphasize that these authors were the first to combine the concepts of ``leverage neutrality'' and  dimensional analysis.
The assumption of leverage neutrality is based on the Modigliani-Miller theorem (see~\cite{modigliani1958cost}) and leads to a scaling invariance principle which, mathematically speaking, is perfectly analogous to the dimensional scaling requirements mentioned above. 

The remainder of the paper is structured as follows. In Section \ref{sec:TradInv}, we first deduce the proportionality between the number of trades and the price variability as proposed by Jones et al.~\cite{jones1994transactions} from dimensional arguments. Next, we derive the more involved scaling relations proposed by Benzaquen et al.~\cite{benzaquen2016unravelling} as well as Wyart et al.~\cite{wyart2008relation}, again using dimensional analysis, and discuss the assumption of leverage neutrality in this context. Having a theoretical foundation for the discussed relations, we then turn to the empirical analysis in Section \ref{sec:EmpEvi}: Based on data from the NASDAQ stock market, we show that the relation proposed by Benzaquen et al.~\cite{benzaquen2016unravelling} fits the data rather well. 
In Section \ref{sec:VolaScaling}, we take a closer look at
volatility and analyze implications of different time scalings thereof.
We conclude with some empirical results in this respect. 
A reminder on the Pi-theorem from dimensional analysis as well as proofs for all considered relations can be found in the appendix.

\section{The trading invariance principle}
\label{sec:TradInv}
We are interested in explaining the arrival rate of trades in a given stock measured as
\begin{itemize}
\item $N = N_t^{t+T}\quad\,\,$ the number of trades within a fixed time interval $[t,t+T]$ so that $N$ is measured per units of time. Following the notation from \cite{pohl2017amazing}, this link between the variable $N$ and its dimensional unit is therefore given by $$[N] = \mathbb{T}^{-1}.$$
\end{itemize}  
Let us identify the variables (and their dimensions $[\cdot]$)  which are likely to influence the number of trades $N$ in a given interval $[t,t+T]$.
Three obvious candidates are:
\begin{itemize}
	\item $V = V_t^{t+T}\quad\,\,$ the traded volume of the stock during the time interval $[t,t+T]$, measured in units of shares per time $$[V] = \mathbb{S}/\mathbb{T}.$$
	\item $P = P_t^{t+T}\quad\,\,$ the average price of the stock in the interval $[t,t+T]$, measured in units of money per share $$[P] = \mathbb{U}/\mathbb{S}.$$
	\item $\sigma^2= (\sigma^2)_t^{t+T}=\mathbb{V}\text{ar} \left(\log(P_{t+T}) - \log(P_t)\right)\quad $ the variance of the log-price over the time interval $[t,t+T]$.
	We assume $$[\sigma^2] = \mathbb{T}^{-1}.$$
\end{itemize}
If the price process $(P_t)_{t\geq 0}$ follows, e.g.~the Black-Scholes model, see~\eqref{eq:Black-Scholes}, we clearly find the above scaling $[\sigma^2]=\mathbb{T}^{-1}$ and shall retain this  assumption in most of the paper. However, the  scaling of $\sigma^2$ turns out to be more subtle than it seems at first glance. In Section \ref{sec:VolaScaling} below, we shall investigate the implications of a scaling relation $[\sigma^2] = \mathbb{T}^{-2H},$ where $H \in (0,1)$ may be different from $1/2$. For instance, such a scaling may result from price processes based on a fractional Brownian motion $(B^H_t)_{t\geq 0}$ with Hurst parameter $H \in (0,1)$, see~\cite{mandelbrot1968fractional}.
\vspace{5mm} 

Based on these identified dimensions, let us turn to the basic idea of dimensional analysis: the validity of a considered relation should not depend on whether we measure time $\mathbb{T}$ in seconds or in minutes, shares $\mathbb{S}$ in single shares or in packages of hundred shares, 
and money $\mathbb{U}$ in Euros or in Euro-cents. 
\begin{defintion}[Dimensional invariance]
A function $h:\mathbb{R}^n_+ \rightarrow \mathbb{R}_+$ relating the quantity of interest $U$ to the explanatory variables $W_1,\dots,W_n$, i.e, $$U=h(W_1,\dots,W_n),$$ is called \emph{dimensionally invariant} if it is invariant under rescaling  the involved dimensions (in our case $\mathbb{S}, \mathbb{T}$ and  $\mathbb{U}$).
\end{defintion}

As a first - and rather naive - approach we analyze the assumption that the three variables $\sigma^2,P$ and $V$ \emph{fully} explain the number of trades $N$. 
\begin{proposition}
\label{pro:naive}
Assume that the number of trades $N$ depends \emph{only} on the three quantities $\sigma^2,P$ and $V$, i.e.,
\begin{align}
\label{eq:naive}
	N &= g(\sigma^2,P,V),
\end{align}
 where the function $g:\mathbb{R}_+^3\rightarrow\mathbb{R}_+$ is \emph{dimensionally invariant}. Then, there is a constant $c>0$ such that the number of trades $N$ obeys the relation
	\begin{align} \label{eq:N=sigma^2}
		N = c \cdot \sigma^2.
	\end{align}
\end{proposition}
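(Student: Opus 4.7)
The plan is to apply the Buckingham Pi-theorem (the reminder of which is referenced in the appendix) to the four variables $N, \sigma^2, P, V$ with respect to the three base dimensions $\mathbb{T},\mathbb{S},\mathbb{U}$. The dimensionless groups form a vector space whose dimension equals $n - r$, where $n=4$ is the number of variables and $r$ is the rank of the dimension matrix. A direct computation of the relation $N = c\sigma^2$ will then fall out because only one independent dimensionless monomial can be built from the four variables.

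First, I would record the dimensional exponents of the four quantities in the matrix
\[
A \;=\; \bordermatrix{
 & N & \sigma^2 & P & V \cr
\mathbb{T} & -1 & -1 & \phantom{-}0 & -1 \cr
\mathbb{S} & \phantom{-}0 & \phantom{-}0 & -1 & \phantom{-}1 \cr
\mathbb{U} & \phantom{-}0 & \phantom{-}0 & \phantom{-}1 & \phantom{-}0
}.
\]
The three rows are linearly independent (the last two already span the $\mathbb{S}$–$\mathbb{U}$ plane, and the first row is not in that plane), so $r = 3$ and the Pi-theorem guarantees exactly $n - r = 1$ independent dimensionless group.

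Next I would exhibit this group. Seeking exponents $d,a,b,c$ so that $N^{d}(\sigma^{2})^{a}P^{b}V^{c}$ is dimensionless leads, by matching the powers of $\mathbb{U}, \mathbb{S}, \mathbb{T}$ respectively, to $b = 0$, $c - b = 0$, and $-d - a - c = 0$. Hence $b = c = 0$ and $d = -a$, so the unique (up to powers) dimensionless monomial is $\pi := N/\sigma^{2}$. Consequently, the dimensionally invariant assumption $N = g(\sigma^{2},P,V)$ is equivalent to a relation $\Phi(\pi) = 0$, i.e.\ $N/\sigma^{2}$ must be constant, which gives $N = c \cdot \sigma^{2}$ for some $c > 0$.

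I do not foresee a serious obstacle: the argument is a textbook application of the Pi-theorem, and the only step that requires any care is the explicit verification that the dimension matrix $A$ has full rank $3$, so that exactly one dimensionless group survives. The conceptually interesting observation, worth flagging, is that neither $P$ nor $V$ can appear in the final relation, simply because the dimension $\mathbb{U}$ enters only through $P$ and the dimension $\mathbb{S}$ only through $P$ and $V$, forcing their exponents to vanish. This is the dimensional-analysis counterpart to the empirical finding of Jones et al.~\cite{jones1994transactions} that volume carries no additional information about the number of trades beyond what the squared volatility already provides.
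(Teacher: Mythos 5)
Your proposal is correct and follows essentially the same route as the paper: the paper's Appendix~\ref{sec:Proofs} proof writes down the same $3\times 3$ dimension matrix for $\sigma^2,P,V$, notes it has full rank, and solves the inhomogeneous system $By=a$ to get the unique exponent vector $(1,0,0)^\top$, which is just the explicit-form counterpart of your observation that $N/\sigma^2$ is the only dimensionless monomial. The only cosmetic difference is that you use the implicit form of the Pi-theorem (one dimensionless group among all four variables, including $N$) rather than the paper's Corollary~\ref{thm:pi,k=0}; since $N$ enters that group with nonzero exponent, solving for $N$ gives the same conclusion.
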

The proof relies on elementary linear algebra and is given in Appendix \ref{sec:Proofs} below (compare also the proof of Theorem 1 below which is similar).
Recall that relation \eqref{eq:N=sigma^2} goes back to Jones et al.~\cite{jones1994transactions}.

As mentioned in the introduction, one should read the present ``dimensional'' argument in favor of relation \eqref{eq:N=sigma^2} as a pure ``if$\dots$then$\dots$'' assertion: \textbf{if} $N$ really is fully explained by $\sigma^2,P$ and $V$ \textbf{and} the obvious scaling invariances of $\mathbb{S}$, $\mathbb{T}$ and $\mathbb{U}$ are satisfied, \textbf{then} \eqref{eq:N=sigma^2} is the only possible relation. As we shall see below, the empirical data does not reconfirm the validity of \eqref{eq:N=sigma^2}. In other words, we have to turn the above statement upside down: as \eqref{eq:N=sigma^2} is not reconfirmed by empirical data, the variables $\sigma^2,P$ and $V$ cannot fully explain the quantity $N$. It is therefore natural to introduce more/other quantities in order to explain the number of trades $N$. 

Regarding the uniqueness of the function $g$ in \eqref{eq:naive}, the mathematical reason for the unique choice of $g$ given by 
\eqref{eq:N=sigma^2} is that we have three scaling relations (pertaining to the invariance of the ``dimensions'' $\mathbb{S}, \mathbb{U}$ and $\mathbb{T}$) as well as the three explanatory variables $\sigma^2, P$ and $V$. This leads to three linear equations in three unknowns, yielding a unique solution. 
\vspace{5mm}

Let us now try to go beyond the scope of relation \eqref{eq:naive} by considering further explanatory variables. Motivated by Wyart et al.~\cite{wyart2008relation}, we consider the following quantity as relevant for the number of trades $N$ in a given interval $[t,t+T]$, additionally to $\sigma^2, P$ and $V$: 
\begin{itemize}
\item $S = S_t^{t+T}\quad\,\,$ the average bid-ask spread in the interval $[t,t+T]$, measured in units of money per share $$[S] = \mathbb{U}/\mathbb{S}.$$
\end{itemize}
Following Benzaquen et al.~\cite{benzaquen2016unravelling}, it is also convenient to alternatively consider the quantity
\begin{itemize}
\item $C=C_t^{t+T}\quad\,\,$ the average cost per trade in the interval $[t,t+T]$, measured in units of money $$[C] = \mathbb{U}.$$
\end{itemize}

To visualize things, suppose that for some stock we observe in average during the time interval $[t,t+T]$ an ask price of EUR$12.30$ and a bid price of EUR$12.20$ so that the bid-ask spread $S$ equals 10 cents. If the average trade size in the interval $[t,t+T]$, denoted by $Q=Q_t^{t+T}$, is 500 shares, we obtain that the average cost per trade $C = QS$ is EUR$50$. 
A discussion of the difference between using $S$ rather than $C$ as an explanatory variable can be found at the end of this section. For now, let us follow Benzaquen et al.~\cite{benzaquen2016unravelling}  for our derivation of the \emph{intraday trading invariance principle} and pass to the set $\sigma^2, P,V$ and $C$ of explanatory variables, i.e.,
\begin{align}\label{eq:N=g(sigma^2,P,V,C)1}
	N &= g(\sigma^2,P,V,C),
\end{align}
for some function $g:\mathbb{R}_+^4\rightarrow\mathbb{R}_+$. As we now have four explanatory variables, the three equations yielded by the scale invariance of the dimensions $\mathbb{S}, \mathbb{U}$ and $\mathbb{T}$ are not sufficient anymore to imply an (essentially) unique solution for $g$. In fact, the four explanatory variables above combined with the three invariance relations pertaining to  $\mathbb{S}$, $\mathbb{T}$ and  $\mathbb{U}$ only yield a general solution of \eqref{eq:N=g(sigma^2,P,V,C)1} of the form
\begin{align}
\label{eq:UnknownFunctionf}
N= \sigma^2 f\left( \frac{PV}{\sigma^2 C} \right),
\end{align}
where $f:\mathbb{R}_+ \rightarrow \mathbb{R}_+$ is an arbitrary function whose generality cannot be restricted by only relying on arguments pertaining to dimensional analysis with respect to the three dimensions $\mathbb{S}$, $\mathbb{T}$ and  $\mathbb{U}$ (see Appendix \ref{sec:Proofs}).
\vspace{5mm}

Hence, in order to obtain such a crisp result as in \eqref{eq:N=sigma^2}, an additional ``dimensional invariance'' is required. Kyle and Obizhaeva~\cite{kyle2017dimensional} found a remedy: a no-arbitrage type argument, referred to as ``leverage neutrality''.\footnote{Note that Kyle and Obizhaeva~\cite{kyle2017dimensional} use the argument of leverage neutrality in the context of market impact. But, of course, the same idea applies in the present situation.}
This concept is inspired by the findings of Modigliani and Miller~\cite{modigliani1958cost} (compare~\cite{pohl2017amazing}): Consider a stock of a company, and suppose that the company changes its capital structure by paying dividends or by raising new capital. The Modigliani-Miller theorem tells us precisely which features of the company are \emph{not affected} by a change in the capital structure. This allows us to establish how certain quantities behave when varying the leverage in terms of the relation between debt and equity of a company.

From a conceptual point of view, the assumption of leverage neutrality gives a constraint on the behavior of the quantities $N, \sigma^2, P, V, C$ (resp. $S$) in case of changing the firm's capital structure.
This constraint can be understood as an additional though synthetic dimension in our analysis, which we refer to as the Modigliani-Miller ``dimension'' $\mathbb{M}$. The Modigliani-Miller ``dimension'' $\mathbb{M}$ of a share of a company is measured in terms of the leverage $\mathcal{L}$, i.e., the quantity 
\begin{align*}
	\mathcal{L} = \frac{\text{total assets}}{\text{equity}}.
\end{align*}
Multiplying $\mathcal{L}$ by a factor $A > 1$  is equivalent to paying out $(1-A^{-1})$ of the equity as cash-dividends. On the other hand, multiplying $\mathcal{L}$ by a factor $0<A<1$ corresponds to raising new capital in order to increase the firm's equity by a factor $A^{-1}$. Following Kyle and Obizhaeva~\cite{kyle2017dimensional} as well as~\cite{pohl2017amazing}, we are led to the following assumption:

\begin{LNassumption}[\cite{kyle2017dimensional,pohl2017amazing}]
Scaling the Modigliani-Miller ``dimension'' $\mathbb{M}$ by a factor $A \in \mathbb{R}_+$ implies that 
\begin{itemize}
\item $N$, $V$ and $C$ (as well as $S$) remain constant,
\item $P$ changes by a factor $A^{-1}$,
\item $\sigma^2$ changes by a factor $A^2$.
\end{itemize}
\end{LNassumption}

To recapitulate: Setting $A=2$ corresponds to paying out half of the equity as dividends so that each share yields a dividend of $(1-A^{-1})P=P/2$. The stock price is, thus, multiplied by $A^{-1}=1/2$ while the volatility $\sigma$ is  multiplied by $A=2$.
The remaining quantities are not affected by changing the leverage, in accordance with the insight of Modigliani and Miller~\cite{modigliani1958cost} and the recent work by  Kyle and Obizhaeva~\cite{kyle2017dimensional}.  The economic reason is that the value of the assets of the corresponding company and hence the associated risk does not change. 
\begin{defintion}[Leverage neutrality]
A function $h:\mathbb{R}_+^n \rightarrow \mathbb{R}_+$ relating the quantity $N$ to the explanatory variables $\sigma^2, P, V, C$ and $S$, i.e, $$N=h(\sigma^2, P,V, C, S),$$ is called \emph{leverage neutral} if it is invariant when rescaling the Modigliani-Miller dimension $\mathbb{M}$ of the variables $N, \sigma^2, P, V, C, S$ as defined in the assumption above.
\end{defintion}

We can now derive the following relation, which is the focus of the present paper. It relies on the basic fact that under the ``Leverage Neutrality Assumption'' we now find four linear equations in order to determine four unknowns.
Note that Benzaquen et al.~\cite{benzaquen2016unravelling} coined this relation the ``3/2-law''. 
\begin{theorem}[$(3/2)$-law] 
\label{thm:3/2law}
Suppose the ``Leverage Neutrality Assumption'' holds and that the number of trades $N$ depends \emph{only} on the four quantities $\sigma^2, P, V$ and $C$, i.e.,
\begin{align}
\label{eq:N=g(sigma^2,P,V,C)THM}
	N &= g(\sigma^2,P,V,C),
\end{align}
 where the function $g:\mathbb{R}_+^4\rightarrow\mathbb{R}_+$ is \emph{dimensionally invariant} and \emph{leverage neutral}. Then, there is a constant $c>0$ such that the number of trades $N$ obeys the relation
\begin{align}\label{eq:3/2law}
	N^{3/2} =c\, \cdot\, \frac{\sigma PV}{C}.
\end{align}
\end{theorem}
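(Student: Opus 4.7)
My plan is to build on the representation \eqref{eq:UnknownFunctionf} already obtained from dimensional analysis with respect to the three physical dimensions $\mathbb{S}$, $\mathbb{T}$, $\mathbb{U}$, and to exploit leverage neutrality as a fourth invariance which pins down the remaining freedom, namely the form of the function $f$. The overall idea is that four explanatory variables matched against four independent scaling constraints should yield an essentially unique monomial relation.

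Concretely, I take as given the representation
\begin{equation*}
N = \sigma^{2}\, f\bigl(PV/(\sigma^{2}C)\bigr),
\end{equation*}
and I compute how the dimensionless combination $\Pi := PV/(\sigma^{2}C)$ behaves under a leverage rescaling by $A>0$. By the Leverage Neutrality Assumption, $P \mapsto A^{-1}P$, $\sigma^{2} \mapsto A^{2}\sigma^{2}$, while $V$ and $C$ remain fixed, so $\Pi \mapsto A^{-3}\Pi$; at the same time $N \mapsto N$ and $\sigma^{2} \mapsto A^{2}\sigma^{2}$. Invariance of the relation under this rescaling translates into the functional equation
\begin{equation*}
f(x) = A^{2}\, f\bigl(A^{-3}x\bigr), \qquad \text{for all } x, A > 0.
\end{equation*}
Substituting $y = A^{-3}x$ gives $f(x)/x^{2/3} = f(y)/y^{2/3}$ for all $x,y > 0$, so $f$ must be the power law $f(x) = c\, x^{2/3}$ for some constant $c>0$. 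Plugging back and raising to the $3/2$-power delivers $N^{3/2} = c^{3/2}\, \sigma PV/C$, which is \eqref{eq:3/2law} up to relabeling the constant.

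An equivalent route, more parallel to the treatment hinted at for Proposition \ref{pro:naive}, is to postulate the monomial ansatz $N = c\,(\sigma^{2})^{\alpha}P^{\beta}V^{\gamma}C^{\delta}$ and to collect the four linear equations on the exponents arising from invariance under rescaling of $\mathbb{T}$, $\mathbb{S}$, $\mathbb{U}$ and the Modigliani--Miller dimension $\mathbb{M}$:
\begin{equation*}
\alpha + \gamma = 1, \qquad \beta = \gamma, \qquad \beta + \delta = 0, \qquad 2\alpha = \beta.
\end{equation*}
The $4\times 4$ coefficient matrix is nonsingular, so the system admits the unique solution $(\alpha,\beta,\gamma,\delta) = (1/3, 2/3, 2/3, -2/3)$, whence $N = c\,(\sigma^{2})^{1/3}(PV/C)^{2/3}$ and \eqref{eq:3/2law} follows immediately.

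The main obstacle is conceptual rather than computational: one must verify that the Modigliani--Miller scaling is genuinely independent of the three physical scalings, so that the four constraints together admit a unique solution matching the four explanatory variables in \eqref{eq:N=g(sigma^2,P,V,C)THM}. This independence is exactly what collapses the one-parameter family \eqref{eq:UnknownFunctionf} down to a unique power-law relation and is what distinguishes this theorem from the non-uniqueness observed after \eqref{eq:N=g(sigma^2,P,V,C)1}.
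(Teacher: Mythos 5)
Your proposal is correct, but your primary route is organized differently from the paper's proof, so a comparison is worthwhile. The paper handles all four invariances simultaneously: its direct proof makes the monomial ansatz, solves the full $4\times 4$ linear system (your second route reproduces exactly this half, with the same equations and the same solution $(1/3,2/3,2/3,-2/3)$), and then establishes uniqueness of $g$ by passing to logarithmic coordinates and showing that the function $G$ with $\log N = G(\log\sigma^2,\log P,\log V,\log C)$ is determined along the four directions given by the rows of Table \ref{tab:Dimensions}, which span $\mathbb{R}^4$; the appendix version instead invokes Corollary \ref{thm:pi,k=0} for the full-rank $4\times 4$ matrix. You factor the argument differently: the three physical dimensions first yield the one-parameter family \eqref{eq:UnknownFunctionf}, and leverage neutrality then becomes the functional equation $f(x)=A^{2}f(A^{-3}x)$, which forces $f(x)=c\,x^{2/3}$ with no regularity assumption on $f$. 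Your computation of the scalings ($\Pi\mapsto A^{-3}\Pi$, $\sigma^2\mapsto A^{2}\sigma^2$, $N\mapsto N$) is correct, and this decomposition has the virtue of isolating precisely what the Modigliani--Miller invariance buys, namely collapsing the arbitrary $f$ in \eqref{eq:UnknownFunctionf} to a power law; the cost is that you must take \eqref{eq:UnknownFunctionf} as already established (the paper proves it in Appendix \ref{sec:Proofs} via Corollary \ref{thm:pi,k=1}), so your argument is complete only in conjunction with that result. One caveat: your second route, taken on its own, only shows that a \emph{monomial} $g$ must carry the exponents $(1/3,2/3,2/3,-2/3)$ and does not exclude non-monomial solutions --- that is exactly the gap the paper's logarithmic-coordinates argument (and, independently, your first route) closes.
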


The proof follows from the general Pi-theorem reviewed in Appendix \ref{sec:DAandPiTheorem}.
For the convenience of the reader, we also present a direct proof of Theorem \ref{thm:3/2law}.
Although slightly longish and repetitive, we hope that it helps the intuition.
\begin{proof}[Proof of Theorem \ref{thm:3/2law}]
	First, we make the following \emph{ansatz} for the function $g$ in \eqref{eq:N=g(sigma^2,P,V,C)THM}:
	\begin{equation}
	\label{eq:E1a}
		g(\sigma^2,P,V,C) = c\cdot (\sigma^2)^{y_1}P^{y_2}V^{y_3}C^{y_4},
	\end{equation}
	where $c>0$ is a constant and $y_1,\dots,y_4$ are unknown real numbers.
	Looking at the first row of Table \ref{tab:Dimensions} yields the relation 
	\begin{equation}
	\label{eq:E1}
		-y_2 + y_3 = 0.
	\end{equation}
	Indeed, when passing from counting shares in packages of $100$ units rather than in single units, the number $P$ is replaced by $100P$ while the number $V$ is replaced by $V/100$. Since the function $g$ in \eqref{eq:E1a} is assumed to be dimensionally invariant, $g$ should remain unchanged by this passage, i.e.,
	\begin{equation}
	\label{eq:E2a}
		c\cdot \left(\sigma^2\right)^{y_1}P^{y_2}V^{y_3}C^{y_4} = c\cdot\left(\sigma^2\right)^{y_1}\left(100P\right)^{y_2}\left(\frac{V}{100}\right)^{y_3}C^{y_4}
	\end{equation}
	which is only possible if \eqref{eq:E1} holds true.
	Looking at the other rows of Table \ref{tab:Dimensions} we therefore get the system of linear equations 
	\begin{equation*}
		\begin{cases}
			\qquad - \,\,y_2 + y_3&=\,\,\,\,0\\
			\qquad \quad y_2\qquad\,\, + y_4 &= \,\,\,\,0\\
			-y_1 \qquad\,\,-  y_3&=-1\\
			\,\,2y_1- y_2&=\,\,\,\,0
		\end{cases}
	\end{equation*}
	whose unique solution is \begin{align} \label{eq:solutiony}
	y = \left( \frac{1}{3}, \frac{2}{3}, \frac{2}{3},  -\frac{2}{3} \right)^\top,
\end{align}	
	which gives \eqref{eq:3/2law} as one possible solution of \eqref{eq:N=g(sigma^2,P,V,C)THM}.

	We still have to show the uniqueness of \eqref{eq:3/2law}.
	To do so, it is convenient to pass to logarithmic coordinates:
	suppose that there is a function $G:\mathbb{R}^4 \to \mathbb{R}$ such that $\log(N) = G\left( \log(\sigma^2),\log(P), \log(V),\log(C)\right)$ or equivalently,
	\begin{equation}
	\label{eq:E2}
		\log(N) - G(X_1, X_2, X_3, X_4) =0,
	\end{equation}
	where we write $\left( \log(\sigma^2),\log(P), \log(V), \log(C)\right)$ as $(X_1, X_2, X_3, X_4)$.
	We have to show that $G$ has the form 
	\begin{equation*}
		\log(N) = y_1X_1 + y_2 X_2 + y_3X_3 +y_4X_4 + \text{const},
	\end{equation*}
	where $y_1, y_2, y_3, y_4$ are given by \eqref{eq:solutiony} and const is a real number.
	Denote by $r_1:= -e_2 + e_3$ the first row of Table \ref{tab:Dimensions}, considered as a vector in $\mathbb{R}^4$, where $(e_i)_{i=1}^{4}$ is the canonical basis of $\mathbb{R}^4$.
	Similarly as in \eqref{eq:E2a}, the first row of Table \ref{tab:Dimensions} and dimensional invariance imply that
	\begin{align*}
		G&\left(\log(\sigma^2),\log(P), \log(V), \log(C)\right) \\&= G\left(\log(\sigma^2),\log(P)+ \log(100), \log(V) - \log(100), \log(C)\right).
	\end{align*}
	Clearly we can replace $\log(100)$ by any real number.
	Speaking abstractly, this means that $G:\mathbb{R}^4\to \mathbb{R}$ must be constant on any straight line parallel to the vector $r_1$.
	A similar argument applies to $r_2 = e_2 + e_4$ and $r_4=2e_1 - e_2$.
	As regard $r_3= -e_1-e_3$ the situation is slightly different, as the third row of Table \ref{tab:Dimensions} also involves a non-zero entry of $N$.

	The third row of Table \ref{tab:Dimensions} and \eqref{eq:E2} imply that for any $\lambda \in \mathbb{R}$,
	\begin{equation*}
		G(X_1-\lambda, X_2 , X_3 - \lambda, X_4) = G(X_1, X_2, X_3, X_4) - \lambda.
	\end{equation*}
	Setting const $:= G(0,0,0,0)$, we have 
	\begin{equation*}
		G(-\lambda, 0, -\lambda,0) = - \lambda + \text{const} \quad \text{for all } \lambda \in \mathbb{R},
	\end{equation*}
	which uniquely determines $G$ on the one-dimensional space spanned by $r_3 = -e_1-e_3$ in $\mathbb{R}^4$.
	As we have seen that $G$ also must be constant along each line in $\mathbb{R}^3$ parallel to $r_1, r_2$ and $ r_4$, and as $r_1, r_2, r_3, r_4$ span the entire space $\mathbb{R}^4$, we conclude that there is only one choice for the function $G$, up to the constant $\text{const} = G(0,0,0,0)$.
\end{proof} 

\begin{table}
\begin{center}
\begin{tabular}{c|cccc|c}
		&$\sigma^2$ & $P$ & $V$   & $C$ & $N$\\ 
		\hline
		$\mathbb{S}$ &0  &-1 & 1 & 0 & 0\\
		$\mathbb{U}$ & 0 & 1 & 0 & 1 & 0\\
		$\mathbb{T}$ &-1 & 0 &-1 & 0 & -1\\
		\hdashline
		$\mathbb{M}$ &2  & -1 &0 & 0 & 0\\
\end{tabular}
\end{center}
\caption{A labelled overview of the dimensions of the quantities $P, V, \sigma^2$ and $C$.}
		\label{tab:Dimensions}
\end{table}

For an alternative derivation of relation \eqref{eq:3/2law}, we pass from considering $\sigma^2$, the variability of the \emph{relative} price changes, to considering $\sigma_B^2$, the variability of the \emph{absolute} price changes.
This will allow us to reduce the \emph{two} explanatory variables $\sigma^2$ and $P$ to \emph{one} explanatory variable $\sigma_B^2 = \sigma^2 P^2$. 
We call $\sigma_B$ the \emph{Bachelier volatility} as it corresponds to Bachelier's original model from 1900, see~\cite{bachelier1900theorie}.
Recall that the dynamics of the price process $(P_t)_{t\ge 0}$ of the Black-Scholes versus the Bachelier model are
\begin{align}
	\label{eq:black-scholes}dP_t &= \sigma P_t dW_t, \quad &\text{(Black-Schloes model)} \\
	dP_t &= \sigma_B dW_t,\quad &\text{(Bachelier model)}\notag
\end{align}
where $W_t$ is a standard Brownian motion. Defining $\sigma_B=\sigma P$ the two models coincide remarkably well as long as $P_t$ does not move too much (compare e.g.~\cite{Schac-Teich08}).
We therefore define 
\begin{itemize}
	\item $\sigma_B^2 =\sigma^2 P^2$ the Bachelier volatility in the interval $[t,t+T]$. 
	Plugging in the dimensions $[\sigma^2]=\mathbb{T}^{-1}$ and $[P] = \mathbb{U}\mathbb{S}^{-1}$, we obtain
  $$[\sigma_B^2] = \mathbb{U}^2 \mathbb{S}^{-2} \mathbb{T}^{-1}.$$
\end{itemize}
A glance at Table \ref{tab:DimensionsBachelier} reveals that $\sigma^2_B$ has Modigliani-Miller dimension $\mathbb{M}$ equal to zero (just as the other variables $V, C$ and $N$).
This enables us to derive the assertion of Theorem \ref{thm:3/2law} by using only the three obvious scaling invariances, but \emph{without} imposing a priori the requirement of leverage neutrality.
\begin{corollary} \label{coro:Corollary3/2Bachelier}
	Suppose the number of trades $N$ depends \emph{only} on the three quantities $\sigma_B^2,V$ and $C$, i.e.,
	\begin{align}\label{eq:N=g(sigma_B^2,V,C)}
		N &= g(\sigma_B^2,V,C),
	\end{align}
	where the function $g:\mathbb{R}_+^3\rightarrow\mathbb{R}_+$ is \emph{dimensionally invariant}. Then, there is a constant $c>0$ such that the number of trades $N$ obeys the relation
	\begin{align} \label{eq:3/2BachelierVersion}
		N^{3/2}=c\, \cdot\, \frac{\sigma_B V}{C}.
	\end{align}
\end{corollary}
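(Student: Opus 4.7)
The plan is to mimic the direct proof of Theorem \ref{thm:3/2law} but in a smaller system: we now have three explanatory variables $\sigma_B^2, V, C$ and three scaling dimensions $\mathbb{S}, \mathbb{U}, \mathbb{T}$, so we again expect a unique exponent vector. Note that because $\sigma_B^2 = \sigma^2 P^2$ absorbs the only two quantities whose Modigliani-Miller dimension was nonzero (and in such a way that the $\mathbb{M}$-exponents cancel), leverage neutrality is no longer needed: the three ordinary dimensional equations already pin down the exponents.

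Concretely, I would first make the ansatz $g(\sigma_B^2, V, C) = c\,(\sigma_B^2)^{y_1} V^{y_2} C^{y_3}$ and read off the dimensional constraints from the (Bachelier analogue of) Table \ref{tab:DimensionsBachelier}, using
\begin{equation*}
[\sigma_B^2] = \mathbb{U}^2\mathbb{S}^{-2}\mathbb{T}^{-1}, \qquad [V] = \mathbb{S}\mathbb{T}^{-1}, \qquad [C] = \mathbb{U}, \qquad [N] = \mathbb{T}^{-1}.
\end{equation*}
Invariance under rescaling $\mathbb{S}, \mathbb{U}, \mathbb{T}$ yields the linear system
\begin{equation*}
\begin{cases}
-2y_1 + y_2 &= \phantom{-}0, \\
\phantom{-}2y_1 + y_3 &= \phantom{-}0, \\
-y_1 - y_2 &= -1,
\end{cases}
\end{equation*}
whose unique solution is $y = (1/3, 2/3, -2/3)^{\top}$. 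Raising both sides to the power $3/2$ then gives exactly \eqref{eq:3/2BachelierVersion}, establishing that the claimed relation is at least one admissible candidate.

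For uniqueness, I would repeat the logarithmic argument from the proof of Theorem \ref{thm:3/2law}. Writing $\log N = G(\log \sigma_B^2, \log V, \log C)$, dimensional invariance along the $\mathbb{S}$-row and the $\mathbb{U}$-row forces $G$ to be constant along the two straight lines in $\mathbb{R}^3$ parallel to $r_1 = -2e_1 + e_2$ and $r_2 = 2e_1 + e_3$, while the $\mathbb{T}$-row (which involves a nonzero entry for $N$) gives the affine relation $G(X_1 - \lambda, X_2 - \lambda, X_3) = G(X_1, X_2, X_3) - \lambda$ for all $\lambda \in \mathbb{R}$, corresponding to $r_3 = -e_1 - e_3$. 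Setting $\mathrm{const} := G(0,0,0)$ pins down $G$ along the line spanned by $r_3$, and since $r_1, r_2, r_3$ are linearly independent and therefore span $\mathbb{R}^3$, the value of $G$ at an arbitrary point is determined; hence $G$ is the affine function $y_1 X_1 + y_2 X_2 + y_3 X_3 + \mathrm{const}$ with the exponents found above.

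The main obstacle, though essentially a bookkeeping one, is verifying that $r_1, r_2, r_3$ indeed span $\mathbb{R}^3$ (equivalently that the $3 \times 3$ dimension matrix has full rank); a quick determinant check confirms this. The derivation is otherwise strictly easier than that of Theorem \ref{thm:3/2law}, since the synthetic $\mathbb{M}$-row is no longer needed — precisely the conceptual point the corollary is meant to highlight.
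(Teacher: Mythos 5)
Your proposal is correct and takes essentially the same route as the paper: the appendix proof simply applies Corollary \ref{thm:pi,k=0} to the identical $3\times 3$ system $By=a$ with solution $y=(1/3,2/3,-2/3)^\top$, whereas you write out the equivalent direct ansatz-plus-logarithmic-uniqueness argument from the proof of Theorem \ref{thm:3/2law}, exactly as the paper's remark ``analogous to (and even easier than) the above proof'' invites. One harmless slip: the direction vector for the $\mathbb{T}$-row should be $r_3=-e_1-e_2$ (matching the affine relation $G(X_1-\lambda,X_2-\lambda,X_3)=G(X_1,X_2,X_3)-\lambda$ that you correctly state), not $-e_1-e_3$; the spanning and rank claims still hold with the corrected vector.
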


The proof is analogous to (and even easier than) the above proof. 
Note that Proposition \ref{pro:naive} and Corollary \ref{coro:Corollary3/2Bachelier} both only rely on the very convincing invariance assumption with respect to  $\mathbb{S}$, $\mathbb{T}$ and $\mathbb{U}$, but not on the ``Leverage Neutrality Assumption''.

Anticipating that relation \eqref{eq:3/2BachelierVersion} gives a superior fit to empirical data than relation \eqref{eq:N=sigma^2} we can draw the following conclusion: the choice of $\sigma_B^2, V, C$ as explanatory variables for the quantity $N$ is superior to the choice $\sigma^2, P, V$ made in Proposition \ref{pro:naive} above.

Here is a ``dimensional argument'' why we should expect a better result from Corollary \ref{coro:Corollary3/2Bachelier} as compared to Proposition \ref{pro:naive}.
It follows from the very approach of dimensional analysis that everything hinges on the assumption that the chosen explanatory variables indeed ``fully explain'' the dependent variable.
Of course, in reality such an assumption will -- at best -- only be approximately satisfied.
The art of the game is to find a combination of explanatory variables which ``best'' explain the resulting variable.
The choice of the variables $\sigma^2_B, V, C$ as in Corollary \ref{coro:Corollary3/2Bachelier} \emph{automatically} implies that the ``Leverage Neutrality Assumption'' is satisfied as shown in Table \ref{tab:DimensionsBachelier}.
Indeed, the variables $\sigma^2_B, V, C$ as well as $N$ have a zero entry for the Modigliani-Miller dimension $\mathbb{M}$.
Therefore, \emph{any} function relating these variables is \emph{automatically} leverage neutral.
This is in contrast to the choice of variables $\sigma^2, P, V$ in Proposition \ref{pro:naive} as
Table \ref{tab:Dimensions} reveals that $P$ and $\sigma^2$ have a non-trivial dependence on $\mathbb{M}$.
It follows that formula \eqref{eq:N=sigma^2} does not satisfy the invariance relation dictated by the ``Leverage Neutrality Assumption''.
\vspace{5mm}

\begin{table}
\begin{center}
\begin{tabular}{c|ccc|c}
		 & $\sigma_B^2$& $V$  & $C$ & $N$\\ 
		\hline
		$\mathbb{S}$  & -2 & \,\,1 & 0 & \,\,0\\
		$\mathbb{U}$  &  \,\,2 & \,\,0 & 1 & \,\,0\\
		$\mathbb{T}$  & -1 &-1& 0 & -1\\
		\hdashline
		$\mathbb{M}$  & 0 & 0  & 0 & 0\\
\end{tabular}
\end{center}
\caption{A labelled overview of the dimensions of the quantities $V, \sigma_B^2 = \sigma^2 P^2$ and $C$.}
		\label{tab:DimensionsBachelier}
\end{table}

Finally, we examine the implications of substituting the cost per trade $C$ by its more common counterpart, the bid-ask spread $S$, introduced above. In fact, in the present context it is equivalent to use either $C$ or $S$ as explanatory variables for the number of trades $N$ - provided that the traded volume $V$ is already one of the explanatory variables. Indeed, we have the relation $C = SQ = SV/N$ since the average trade size $Q$ in the interval $[t,t+T]$ is given by the traded volume $V$ divided by the number of trades $N$.
Hence, if we know the functional relation between $N$ and $V$, we also know the functional relation between $N$ and $Q$ and can therefore pass from $S$ to $C=SQ$ and vice versa. 
Thus, we may restate Theorem \ref{thm:3/2law} (and, equivalently, Corollary \ref{coro:Corollary3/2Bachelier}) in terms of the bid-ask spread $S$ rather than the cost per trade $C$ in the following corollary. 
\begin{corollary}
\label{cor:sigma2OverS}
Suppose that the number of trades $N$ depends \emph{only} on the three quantities $\sigma_B^2$, $V$ and $S$, i.e.,
\begin{align}\label{eq:N=g(sigma_B^2,V,S)}
	N &= g(\sigma_B^2,V,S),
\end{align}
where the function $g:\mathbb{R}_+^3\rightarrow\mathbb{R}_+$ \emph{dimensionally invariant}. Then, there is a constant $c>0$ such that the number of trades $N$ obeys the relation
\begin{align} \label{eq:N=sigmaB/S^2}
	N=c^2\, \cdot\, \left( \frac{\sigma_B}{S} \right)^{2}.
\end{align}
\end{corollary}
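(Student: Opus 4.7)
The plan is twofold: a quick reduction to Corollary \ref{coro:Corollary3/2Bachelier}, and in parallel a direct verification by the same dimensional-analysis recipe that was used in the proof of Theorem \ref{thm:3/2law}. Since the two parametrisations $(\sigma_B^2, V, C)$ and $(\sigma_B^2, V, S)$ are linked by the identity $C = SQ = SV/N$ already noted in the paragraph preceding the corollary, the hypothesis $N = g(\sigma_B^2, V, S)$ is equivalent to the hypothesis of Corollary \ref{coro:Corollary3/2Bachelier}, so nothing is lost by taking this route.

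First I would simply substitute $C = SV/N$ into the Bachelier-type $(3/2)$-law $N^{3/2} = c\,\sigma_B V/C$ from Corollary \ref{coro:Corollary3/2Bachelier}, which gives
\begin{equation*}
N^{3/2} = c\,\sigma_B V \cdot \frac{N}{SV} = c\,\sigma_B\,\frac{N}{S}.
\end{equation*}
Dividing both sides by $N$ and squaring yields $N = c^2 (\sigma_B/S)^2$, which is precisely \eqref{eq:N=sigmaB/S^2}.

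As a self-contained check I would also run the direct argument. Make the power ansatz $g(\sigma_B^2, V, S) = c\,(\sigma_B^2)^{y_1} V^{y_2} S^{y_3}$, and read off from the dimensions $[\sigma_B^2] = \mathbb{U}^2\mathbb{S}^{-2}\mathbb{T}^{-1}$, $[V] = \mathbb{S}\mathbb{T}^{-1}$, $[S] = \mathbb{U}\mathbb{S}^{-1}$, $[N] = \mathbb{T}^{-1}$, the three scaling equations
\begin{equation*}
-2y_1 + y_2 - y_3 = 0,\qquad 2y_1 + y_3 = 0,\qquad -y_1 - y_2 = -1,
\end{equation*}
whose unique solution is $(y_1, y_2, y_3) = (1, 0, -2)$, giving again \eqref{eq:N=sigmaB/S^2}.

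The uniqueness step (that an arbitrary dimensionally invariant $g$, not just a power, must coincide with this product up to a multiplicative constant) would be handled verbatim as in the proof of Theorem \ref{thm:3/2law}: pass to logarithmic coordinates, observe that two of the three dimension rows force $\log g$ to be constant along the corresponding lines in $\mathbb{R}^3$, and that the third row (the one involving $N$) pins down the constant along its direction; the three row vectors are easily seen to span $\mathbb{R}^3$. I do not expect a real obstacle here. The Modigliani-Miller dimension does not need to be invoked, since Table \ref{tab:DimensionsBachelier} shows that $\sigma_B^2, V, N$ have zero $\mathbb{M}$-entry and the same clearly holds for $S$, so leverage neutrality is automatic and the three ordinary scalings already suffice.
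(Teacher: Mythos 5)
Your first argument --- substituting $C = SV/N$ into the conclusion of Corollary \ref{coro:Corollary3/2Bachelier} --- is exactly the paper's own proof, and it is correct. The additional direct dimensional-analysis verification (the scaling equations are right, the $3\times 3$ system is nonsingular, and the unique exponent vector $(1,0,-2)$ makes $V$ drop out) is also correct and is a nice self-contained alternative that sidesteps the slightly delicate point that the substitution $C = SV/N$ involves $N$ itself, but it is not needed.
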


We observe that the variables $\sigma^2_B, V$ and $S$ again have no Modigliani-Miller dimension $\mathbb{M}$, i.e., they are invariant under changes of the leverage.
Therefore, formula \eqref{eq:N=sigmaB/S^2} satisfies the invariance principle given by the ``Leverage Neutrality Assumption''. We note again that given the relations $C = SQ = SV/N$ as well as $\sigma_B^2 = \sigma^2 P^2$ the two equations \eqref{eq:3/2law} and \eqref{eq:N=sigmaB/S^2} are indeed equivalent. 

Relation \eqref{eq:N=sigmaB/S^2} is precisely the one proposed by Wyart et al.~\cite{wyart2008relation}. By rearranging the terms, we find that
\begin{align}
\label{eq:S=sigma/N}
S^2 = c^2 \cdot  \frac{\sigma_B^2}{N}.
\end{align}
The interpretation is that the squared Bachelier volatility per trade is proportional to the square of the spread. If we elaborate further on \eqref{eq:S=sigma/N}, we find that
\begin{align}\label{eq:S=sigma/N2}
\frac{S}{P} = c \cdot  \frac{\sigma}{\sqrt{N}}.
\end{align}
Without loss of generality, we can determine the price $P$ on the left hand side of \eqref{eq:S=sigma/N2} as midquote price, i.e., the average of the best ask- and bid price. Then, $S/P$ refers to the so called proportional bid-ask spread which can be used to approximate a dealer's ``round trip'' transaction costs. Clearly, the approximate round-trip costs increase in the volatility of a relative price change and decrease in the trading activity.
\vspace{5mm}

Summing up this section, we have seen that the relation $N \sim \sigma^2$ proposed by Jones et al.~\cite{jones1994transactions} follows from the restrictive assumption that the number of trades $N$ \emph{only} depends on the quantities $\sigma^2, P$ and $V$ as well as dimensional arguments (see Proposition \ref{pro:naive}). Going beyond the latter relation, it seems reasonable to include information concerning the bid-ask spread in our analysis. Depending on whether we choose the trading cost $C$ or the bid-ask spread $S$ directly, we are led to either the 3/2-law $N^{3/2} \sim \sigma P V /C$ proposed by Benzaquen et al.~\cite{benzaquen2016unravelling} (see Theorem \ref{thm:3/2law}) or to the relation $S \sim \sigma_B/\sqrt{N}$ proposed by Wyart et al.~\cite{wyart2008relation} (see Corollary \ref{cor:sigma2OverS}). When proving the two latter relations we have seen that the assumption of leverage neutrality comes into play. Alternatively, we can also consider the product $\sigma^2 P^2$, rather than $\sigma^2$ and $P$  separately. This consideration of the ``Bachelier volatility'' $\sigma_B = \sigma P$ reduces the complexity of the problem inasmuch as the assumption of leverage neutrality is not needed anymore.
Again, the \emph{actual} validity of any of the above scaling laws should be confirmed by exhaustive empirical analysis.

\section[Empirical evidence]{Empirical evidence}\label{sec:EmpEvi}

\subsection{Degrees of universality and relevant literature}

We now turn to the empirical analysis of relation \eqref{eq:N=sigma^2} as well as of the $3/2$-law \eqref{eq:3/2law}. When collecting data for the quantities $N$, $\sigma^2$, $V$, $P$ and $C$, one has to specify the considered asset and the considered time period as well as the length $T$ of the time interval  over which the data is aggregated. We cannot expect that the constant $c$ appearing in relations 
\eqref{eq:N=sigma^2} resp. \eqref{eq:3/2law} is the same for each considered interval \emph{and} each possible interval length \emph{and} each considered asset in either one of the relations. We can only hope that a given relation holds \emph{on average}.  
Based on the nomenclature introduced in Benzaquen et al.~\cite{benzaquen2016unravelling}, we therefore distinguish the following three degrees of universality attached to the validity of  relations \eqref{eq:N=sigma^2} and \eqref{eq:3/2law}:
\begin{enumerate}
\item \emph{No universality:} The relation holds on average for a fixed asset and a fixed interval length. However, the constant $c$ varies significantly for different assets and different interval lengths.
\item \emph{Weak universality:} The relation holds on average for some assets and some interval lengths with similar values from the constant $c$.
\item \emph{Strong universality:} The relation holds on average for all assets and all interval lengths with similar values from the constant $c$.
\end{enumerate} 
Note that this distinction does not allow for the possibility that the validity attached to a given relation changes over time, simply because we consider only one specific time period.
\vspace{5mm}

Let us shortly discuss the relevant empirical evidence which can be found in the literature before turning to our own empirical analysis.
Andersen et al.~\cite{andersen2016intraday} conducted an important empirical study in the present context. They test the relation 
\begin{align}
\label{eq:invarianteI}
I = \frac{\sigma P V}{N^{3/2}},
\end{align}
where $I$ is independently
and identically distributed across assets and time for E-mini S\&P 500 futures contract. Neglecting the price $P$, they show that relation $N^{3/2} \sim V \sigma$ holds when averaging within and across trading days for this particular asset. In fact, their data fits the latter relation nearly perfectly compared to the relations $V \sim \sigma^2$ resp. $N\sim \sigma^2$ proposed by Tauchen and Pitts~\cite{tauchen1983price} resp. Jones et al.~\cite{jones1994transactions}.
Benzaquen et al.~\cite{benzaquen2016unravelling} address the same question by examining eleven additional futures contracts as well as 300 US stocks. Aiming to confirm that $\beta = 3/2$ in the relation $N^{\beta} \sim \sigma P V$, they estimate $\beta$ for each considered stock individually. 
They find that $\hat{\beta} = 1.54 \pm 0.11$, where the uncertainty here is the root mean square cross-sectional dispersion. Thus, these authors note that this provides evidence that the relation $N^{3/2} \sim  \sigma P V $ holds also on the stock market and not only on the very liquid futures market. Moreover, they show that the distribution of $I$ in \eqref{eq:invarianteI} depends significantly on the studied asset and thus, conclude that relation \eqref{eq:invarianteI} holds only with weak universality. As an additional contribution, the authors reveal that the inclusion of the trading cost $C$ is beneficial in the sense that their proposed invariant $\mathcal{I} = \sigma P V C^{-1} N^{-3/2}$ is almost constant for different assets. 

Finally, let us mention the evidence in the earlier work by Wyart et al.~\cite{wyart2008relation}.
These authors show that relation \eqref{eq:S=sigma/N} describes the data very well when the right level of aggregation is chosen. When examining the France Telecom stock, $S$ and $\sigma_B/\sqrt{N}$ are averaged over two trading days, while in case of NYSE stocks these quantities are averaged over an entire year. The constant $c$ in relation \eqref{eq:S=sigma/N} is found to lie between $1.2$ and $1.6$. Moreover, the authors note that the typical intraday pattern of the considered quantities is in line with \eqref{eq:S=sigma/N}: The U-shaped pattern of the volatility $\sigma_B$ is explained by the decline of the bid-ask spread $S$ and an increase of the number of trades $N$ within the trading day.

\subsection{Description of data} \label{ssec:Data}
Our empirical analysis is based on limit order book data provided by the {LOBSTER} database (\texttt{https://lobsterdata.com}). The considered sampling period begins on January 2, 2015 and ends on August 31, 2015, leaving 167 trading days. Among all NASDAQ stocks, $d=128$ sufficiently liquid stocks with high market capitalizations are chosen. Stocks are considered to be ``sufficiently liquid'' as long as the aggregated variables (defined below) can be reasonably treated as continuously distributed, i.e., the empirical distributions of the aggregated variables do not have points with obviously concentrated mass. Observations made during the thirty minutes after the opening of the exchange as well as trading halts are removed.
\vspace{5mm}

Let us fix an interval length $T\in\{30, 60, 120, 180, 360\}$ min for which a developed hypothesis is tested. For the sake of illustration, set the length of the considered time interval $T$ to 60min. This interval length balances the tradeoff between sufficient aggregation of the data on the one hand and some intraday variability on the other hand. As a result, we are left with $n=1002$ non-overlapping time intervals with equal length $T=60$min. Let us concentrate on a specific asset $i \in \lbrace 1,\dots,d \rbrace$ (omitting the index $i$ for ease of notation in the remainder of Section \ref{ssec:Data}) and let $j\in \lbrace 1,\dots,n\rbrace $ refer to an arbitrary interval. Suppose the trades in the considered
interval $j$ arrive at irregularly spaced transaction times $t_1,t_2,\ldots,t_{N_j}$.
Then, 
\begin{itemize}
	\item[$N_j$] denotes the number of trades in the interval $j$,
	\item[$Q_j$] $= N_j^{-1} \sum_{k=1}^{N_j} Q_{t_k}$ denotes the average size of the trades in the interval $j$, where $Q_{t_k}$ denotes the number of shares traded at time $t_k$,
	\item[$V_j$] $= N_j \times Q_j$ is the traded volume in the interval $j$,
	\item[$P_j$] $= N_j^{-1} \sum_{k=1}^{N_j} P_{t_k}$ denotes the average midquote price in the interval $j$, where $P_{t_k} = (A_{t_k} + B_{t_k})/2$ and $A_{t_k}$ (resp. $B_{t_k}$) denotes the best ask (resp. bid) price after the transaction at time $t_k$, 
	\item[$\hat{\sigma}_j^2$] denotes the estimated squared volatility in the interval $j$,
	\item[$S_j$] $= N_j^{-1} \sum_{k=1}^{N_j} S_{t_k}$ denotes the average bid-ask spread in the interval $j$, where $S_{t_k} = A_{t_k} - B_{t_k}$ is the bid-ask spread after the transaction at time $t_k$, and 
	\item[$C_j$] $=Q_j\times S_j$ is the cost per trade in the interval $j$.
\end{itemize}

Note the following four details: Firstly, even though transaction times are recorded on a nano-second level, a time-stamp $t_k$ is recorded $L$-times ($t_{k_1}, \dots, t_{k_L}$) in the raw dataset when a market order is executed against $L$ limit orders at time $t_k$. Such a multiple entry of the same time-stamp enters the number of trades $N_j$ only once (not $L$-times). The size $Q_{t_k}$  of the trade at time $t_k$ is determined by summing the $L$-records in the dataset $Q_{t_{{k}_{\ell}}}$, $\ell=1,\ldots,L$, i.e., $Q_{t_k} = \sum_{\ell=1}^L Q_{t_{k_\ell}}$. The midquote price $P_{t_k}$ and the bid-ask spread $S_{t_k}$ related to the merged market order of size $Q_{t_k}$ are computed as volume-weighted averages
\begin{align*}
	P_{t_k}=Q_{t_k}^{-1}\sum_{\ell=1}^LQ_{t_{{k}_{\ell}}}P_{t_{{k}_{\ell}}}\quad\text{and}\quad S_{t_k}=Q_{t_k}^{-1}\sum_{\ell=1}^LQ_{t_{{k}_{\ell}}}S_{t_{{k}_{\ell}}}.
\end{align*}

Secondly, the aggregated variables, i.e., the average market order size $Q_j$, the average midquote price $P_j$ and the average bid-ask spread $S_j$ of interval $j$, are in fact not computed by the sample averages as state above. Since simple sample averages are sensitive with respect to outliers, e.g.~huge market orders, $Q_j$, $P_j$ and $S_j$  are based on robust averages. In detail, we compute trimmed means of $Q_{t_1},\ldots,Q_{t_{N_{j}}}$, $P_{t_1},\ldots,P_{t_{N_{j}}}$ and $S_{t_1},\ldots,S_{t_{N_{j}}}$ to obtain $Q_j$, $P_j$ and $S_j$ respectively. These trimmed means discard the upper 0.5\% and the lower 0.5\% of the corresponding ordered data and compute the average based on the remaining 99\% of the data. 

Thirdly, the estimated squared volatility $\sigma_j^2$ is computed as  realized variance in interval $j$
\begin{align} \label{eq:RealizedVarianceEstimator}
	\hat{\sigma}_j^2= \sum_{k=2}^{N_j} \left(\log (P_{t_k}) - \log (P_{t_{k-1}})\right)^2.
\end{align}
The properties of the estimator $\hat{\sigma}_j^2$ are well understood for a variety of models for the efficient price process $(P_t)_{t\geq 0}$. For example, if the dynamics of the efficient price process follows the stochastic model $d P_t=\sigma P_t dW_t$, with $\sigma>0$, the estimator $\hat{\sigma}_j^2$ converges weakly in probability to $\sigma^2 T$ (the quadratic variation of the increments of $\left(\log(P_t)\right)_{t\geq0}$) as the number of transactions within interval $j$ becomes dense (as $N_j\rightarrow\infty$). The limit of $\hat{\sigma}_j^2$, however, does not coincide with the quadratic variation of the efficient price process, if the observed midquote price is contaminated by market microstructure noise. This noise, for instance, arises from market imperfections such as price discreteness or informational content in price changes, see~\cite{black1986noise}. To check the robustness of our analysis with respect to the presence of market microstructure noise, several results below can likewise be confirmed by replacing the realized variance by the noise-robust estimator of the quadratic variation proposed in~\cite{hautsch2013preaveraging}. It should be noticed that a distortion of the analysis by the bid-ask bounce is already avoided by considering midquote prices rather than transaction prices. The interested reader will find a gentle introduction explaining how noisy price observations erode the realized variance in~\cite{sahalia2009high}.

Last but not least, note that Benzaquen et al.~\cite{benzaquen2016unravelling} in fact define the cost per trade by 
$\widetilde{C}_j =  N_{j}^{-1} \sum_{k=1}^{N_j} Q_{t_k} S_{t_k}$. This slight difference in the definitions becomes obviously negligible, if the bid-ask spread $S_{t_k}$ is constant over the entire interval $j$. The results presented below are robust with respect to the employed version of the cost per trade as we shall see.

\subsection{$N \sim \sigma^2$ versus $N^{3/2} \sim \sigma P V /C$} \label{ssec:Versus}
To check which of the relations $N \sim \sigma^2$ and $N^{3/2} \sim \sigma P V /C$ is superiorly supported by data, we consider for each stock ($i=1,\ldots,d$) a multiplicative model of the form
\begin{align} \label{/sigmaj^2=exp(alpha)..}
	N_{ij} = \exp(\alpha_i) (\hat{\sigma}_{ij}^2)^{\beta_i} \left( \frac{P_{ij} V_{ij}}{C_{ij}}\right)^{\gamma_i} \exp(\varepsilon_{ij}) \quad\text{with}\quad j=1,\dots,n,
\end{align}
where $\varepsilon_{ij}$, $j=1,\ldots,n$, is an error term that satisfies standard regularity conditions and $\alpha_i$, $\beta_i$ and $\gamma_i$ are unknown real valued parameters. A logarithmic transformation of \eqref{/sigmaj^2=exp(alpha)..} yields the linear model
\begin{align}\label{LinReg}
	\log(N_{ij}) = \alpha_i + \beta_i \log\left(\hat{\sigma}_{ij}^2\right) + \gamma_i \log\left(\frac{P_{ij} V_{ij}}{C_{ij}}\right) + \varepsilon_{ij}.
\end{align}
Since dimensional analysis imposes the restriction $\beta_i+\gamma_i=1$ on the parameters $\beta_i$ and $\gamma_i$, the value $\gamma_i=0$  would imply the relation $N\sim \sigma^2$, whereas $\gamma_i = 2/3$ would imply the relation $N^{3/2} \sim \sigma P V /C$ from Theorem \ref{thm:3/2law}. The estimation of the coefficients $\beta_i$ and $\gamma_i$ subject to the restriction $\beta_i+\gamma_i=1$ therefore allows us to infer which of the two discussed relations is backed by stronger empirical evidence.

Before turning to the constrained estimation of the parameters $\beta_i$ and $\gamma_i$, it deserves to be emphasized that the functional relation between the logarithmic dependent variable $\log(N_j)$ and the logarithmic explanatory variable $\log(\hat{\sigma}_{ij} P_{ij} V_{ij}/C_{ij})$ can be reasonably assumed to be linear for all stocks $i=1,\dots,d$. To conclude this, we have visually inspected the bivariate point-clouds of dependent and explanatory variable. Figure~\ref{fig:Scatter} illustrates this relation for the stocks of the American Airline Group, Inc. (AAL) and Apple Inc. (AAPL). The remaining $126$ stocks show similar patterns.
\begin{figure}
\begin{center}
\includegraphics[scale=0.4]{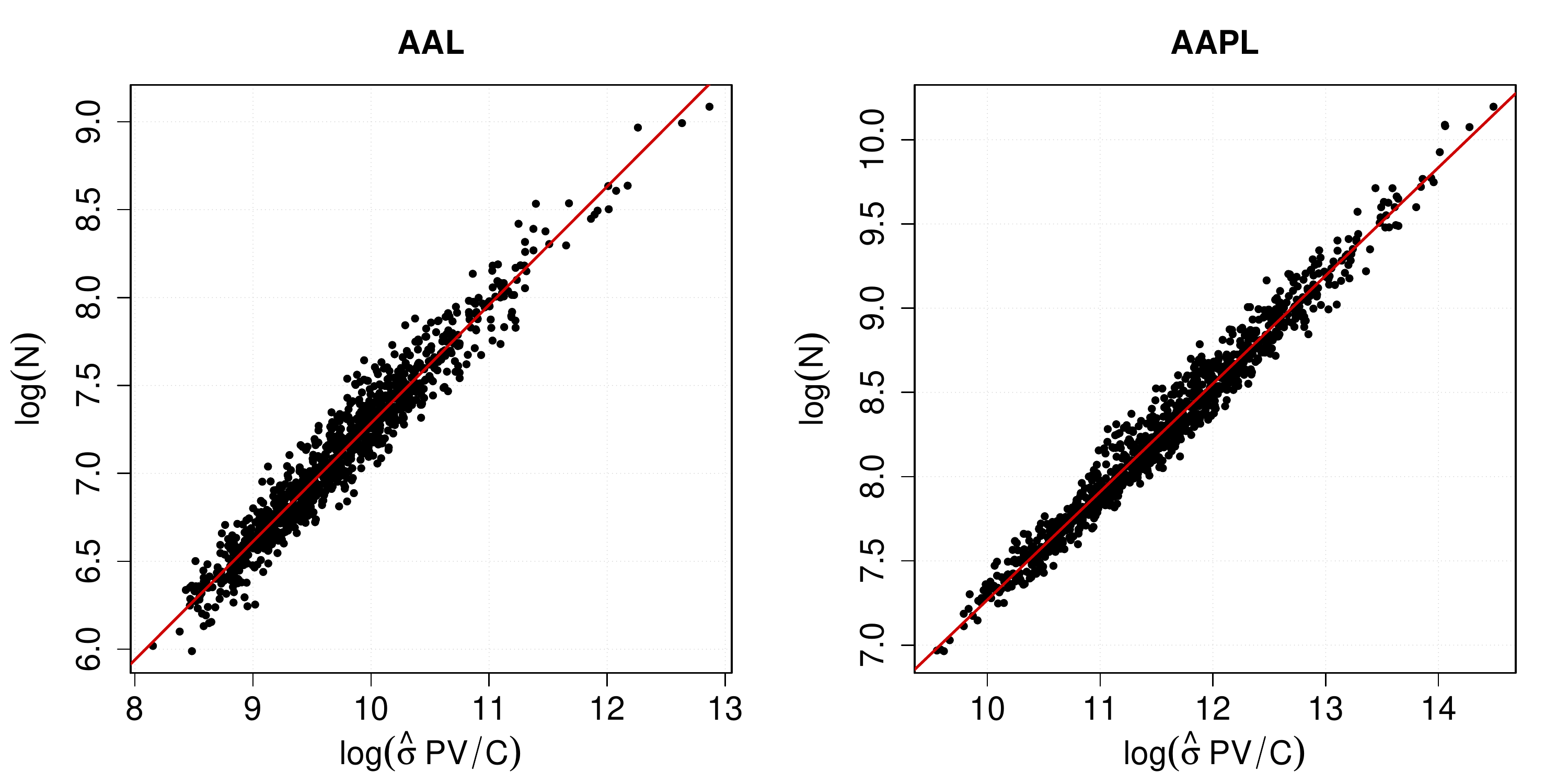}
	\caption{The logarithmic dependent variable $\log(N)$ is plotted versus the logarithmic explanatory variable $\log(\hat\sigma P V /C)$ for the fixed interval length $T = 60$min and the two stocks AAL and AAPL. The lines indicate the estimated linear relations between the considered quantities.}
\label{fig:Scatter}
\end{center}
\end{figure}

For each stock ($i=1,\dots,d$) and all interval lengths $T\in\{30, 60, 120, 180, 360\}$ min, we estimate the parameters $\beta_i$ and $\gamma_i$ in \eqref{LinReg} by ordinary least squares subject to the constraint $\beta_i+\gamma_i=1$. The corresponding estimate of interest is denoted by $\hat{\gamma}_i$. To present the results of these regressions in an informative and compact way, Figure~\ref{fig:den} shows kernel density estimates of $\hat{\gamma}_i$ across $i$ and for fixed $T$.

First, let us come to the main result of this section and concentrate on the solid graphs in Figure~\ref{fig:den} referring to the standard setting based on the realized variance $\hat\sigma_{ij}^2$ defined in \eqref{eq:RealizedVarianceEstimator} and the cost per trade $C_{ij} = Q_{ij} \times S_{ij}$. 
If the parameter $\gamma_i$ of the linear model \eqref{LinReg} is equal to zero, then the underlying variables satisfy the simple relation $N\sim \sigma^2$.
Similarly, if the parameter $\gamma_i$ is equal to $2/3$, then we can conclude that the 3/2-law from Theorem~\ref{thm:3/2law} holds.
 As seen in Figure~\ref{fig:den}, the averages of the estimates $\hat\gamma_i$ (across $i$ for different $T$) are clearly much closer to $2/3$ than to zero for all considered interval lengths $T$. This result supports the claim made in Section~\ref{sec:TradInv} that there is stronger empirical support for the 3/2-law (or equivalently for the relation $N \sim (\sigma P / S)^2$) than for the relation $N \sim \sigma^2$. 

Regarding the robustness of this insight, we have re-conducted the above regression analysis for two slightly different scenarios. One alternative setting considers replacing the realized variance in the linear model \eqref{LinReg} by the market microstructure noise robust estimator of the quadratic variation of~\cite{hautsch2013preaveraging}. The dashed graphs in Figure~\ref{fig:den} are related to density estimates relying on corresponding parameter estimates $\hat\gamma_i$, $i=1,\ldots,d$. The second modification of the initial setting replaces the cost per trade $C_j$ in the linear model \eqref{LinReg} by the variant $\widetilde{C}_j$  of~\cite{benzaquen2016unravelling}. The dotted graphs in Figure~\ref{fig:den} refer to corresponding density estimates. Despite some deviation in the estimates $\hat{\gamma}_i$ for these two alternative settings from the initial one, the solid, dashed and dotted graphs document a rather similar pattern among the estimates of the parameters $\gamma_i$ for all interval lengths $T\in\{30, 60, 120, 180, 360\}$ min. These similarities lead to the conclusion that neither market microstructure noise nor the exact definition of the cost per trade erode the overall relation between the dependent and explanatory variables.  In the remaining part of the manuscript, we take a closer look on the 3/2-law and try to find reasonable explanations for the systematic deviations of the estimates $\hat\gamma_i$ from $2/3$.

\begin{figure}
     \begin{minipage}{1\textwidth}
     \centering
         \includegraphics[width=0.325\textwidth]{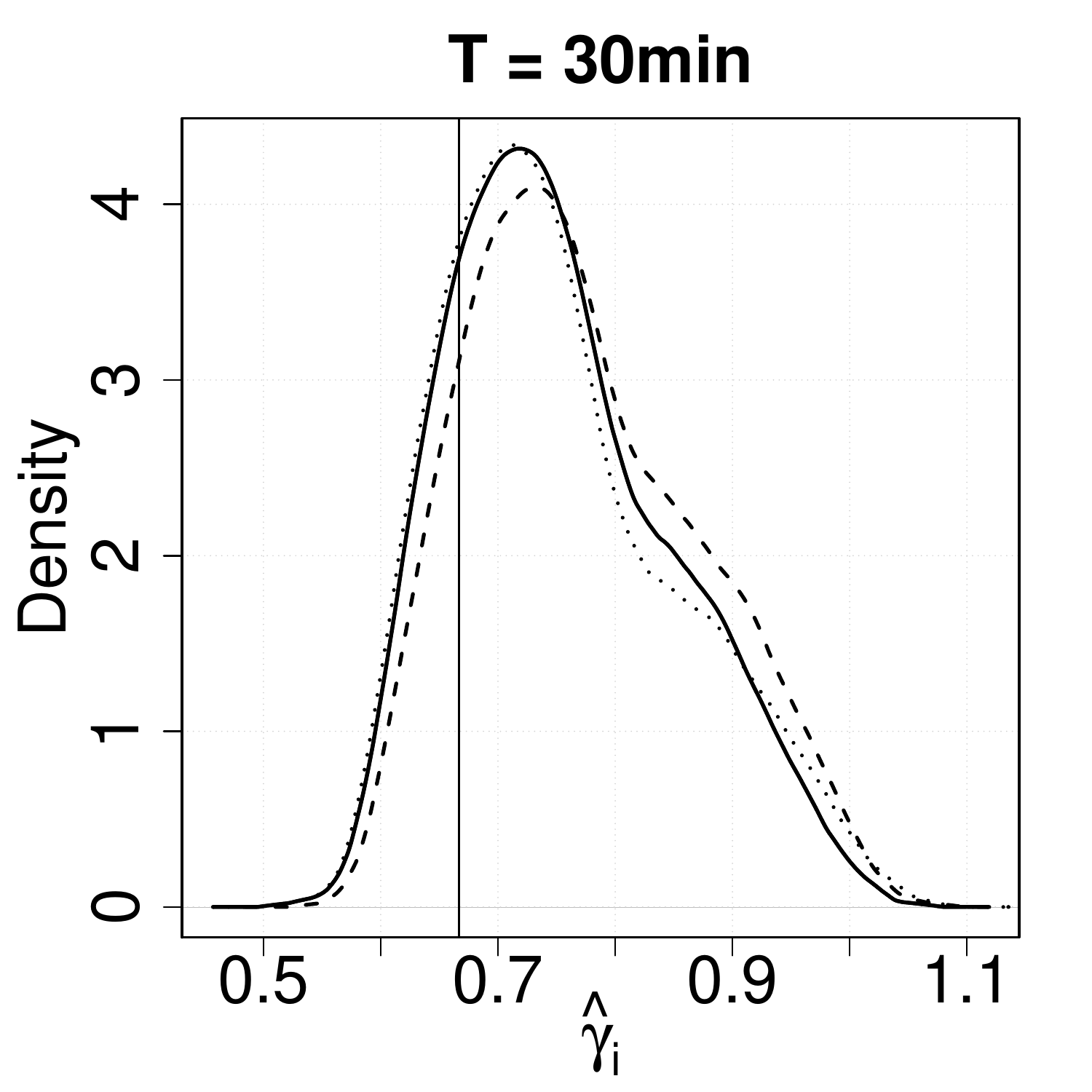}
         \includegraphics[width=0.325\textwidth]{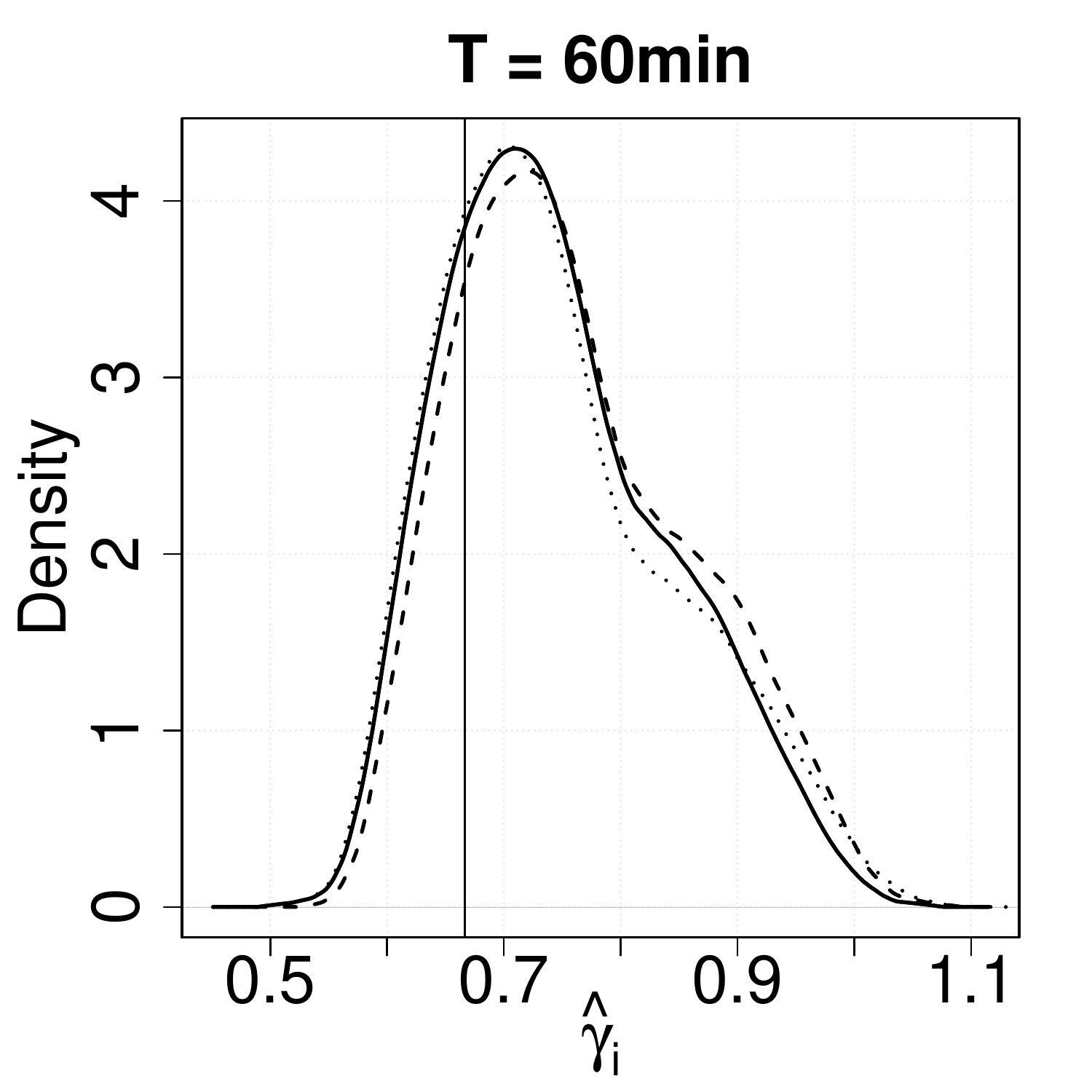}
         \includegraphics[width=0.325\textwidth]{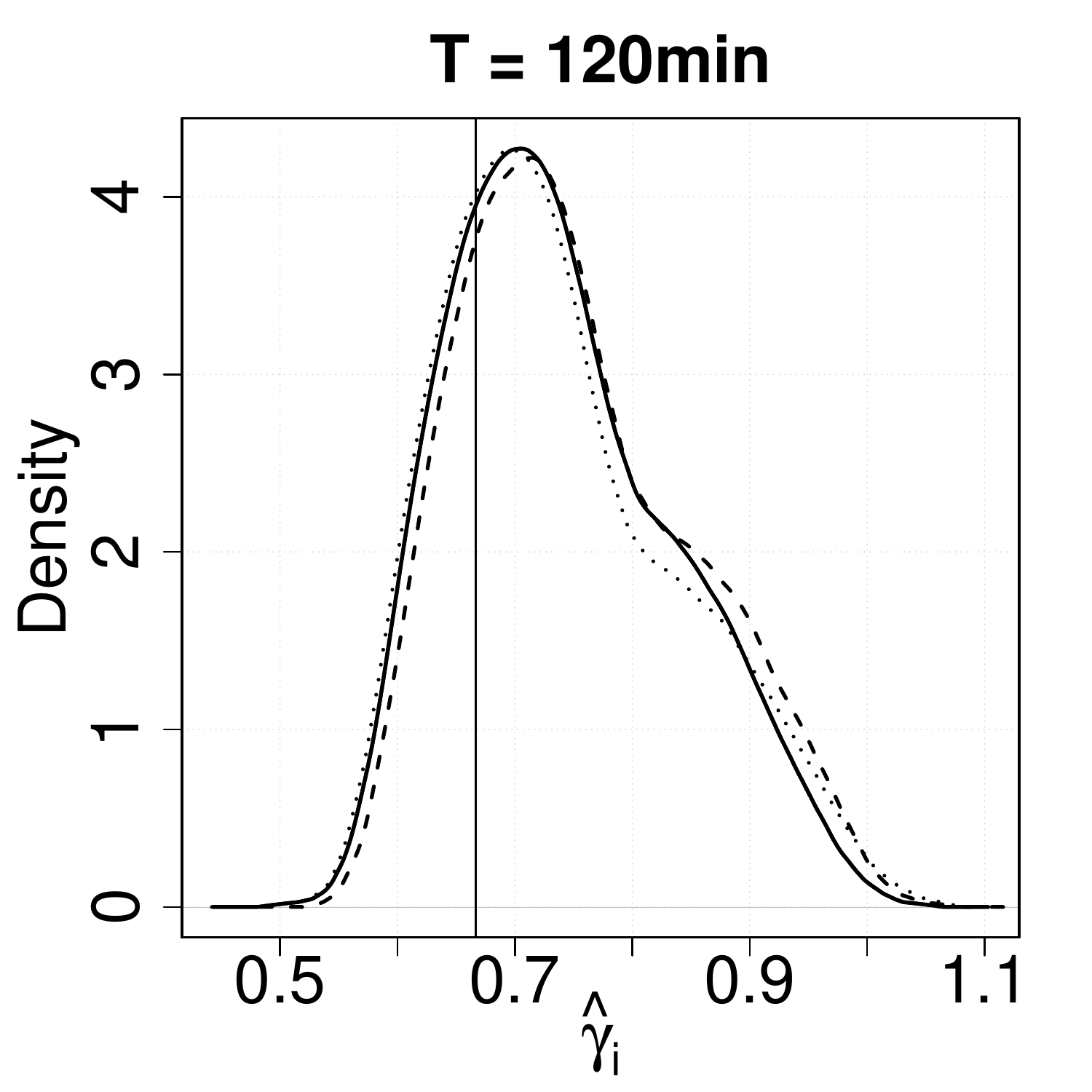}
    \end{minipage}
     \begin{minipage}{1\textwidth}
     \centering
         \includegraphics[width=0.325\textwidth]{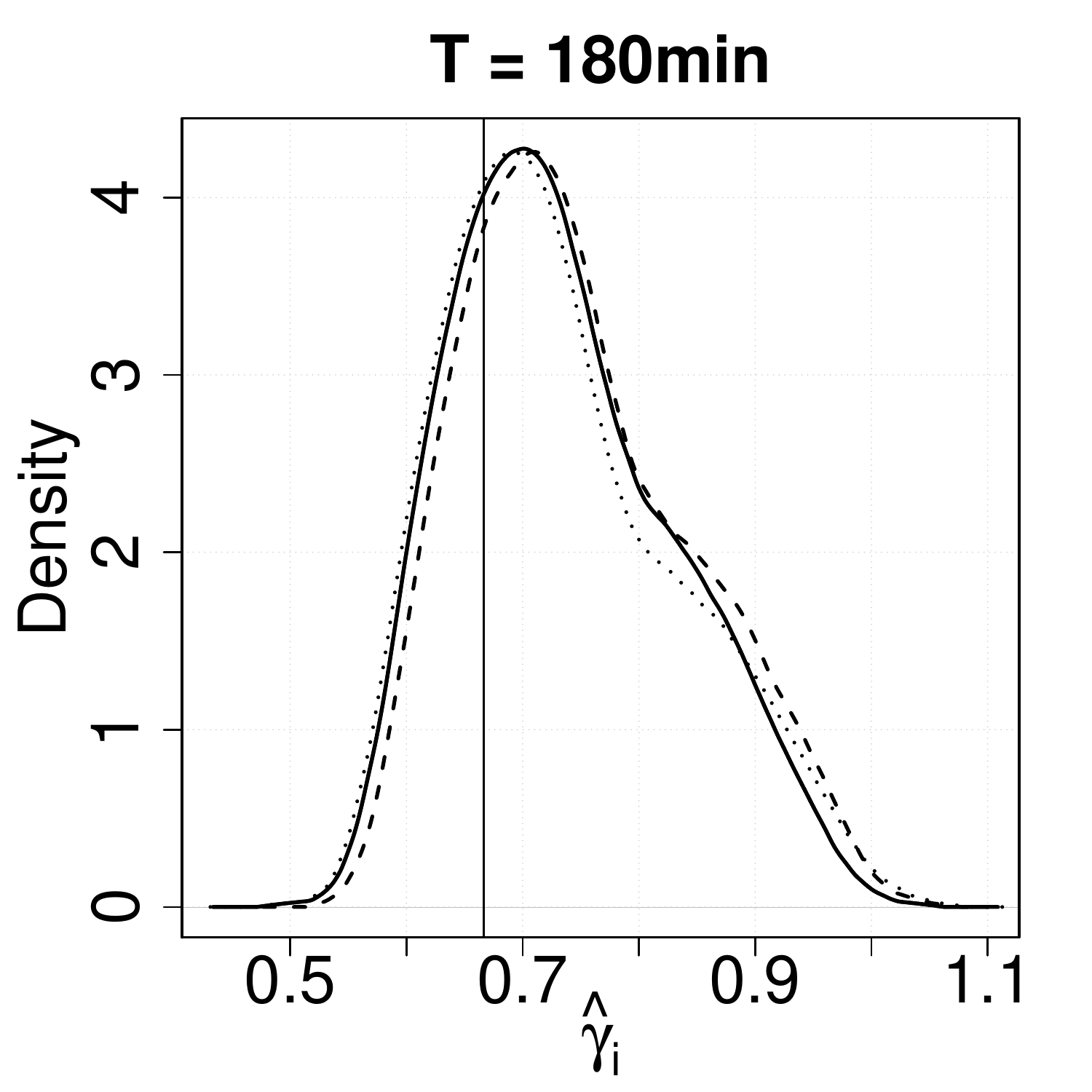}
         \includegraphics[width=0.325\textwidth]{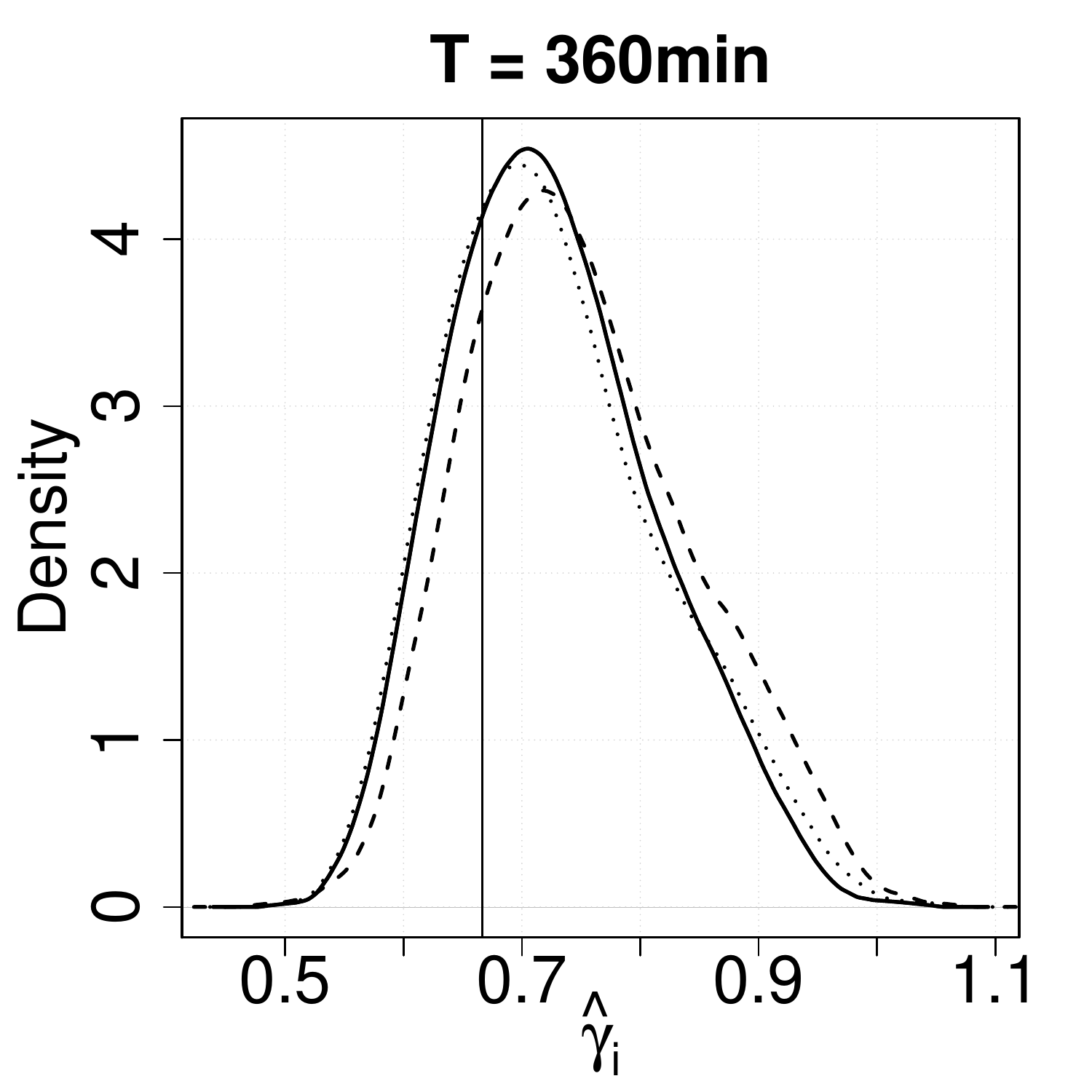}
    \end{minipage}
	\caption{The panels show kernel density estimates across the estimated parameters $\hat{\gamma}_i$ for different interval lengths $T\in\{30, 60, 120, 180, 360\}$ min.}
	\label{fig:den}
\end{figure}

\subsection{On the universality of the 3/2-law}\label{ssec:Universality3/2law}

\begin{figure}
\begin{center}
\includegraphics[scale=0.4]{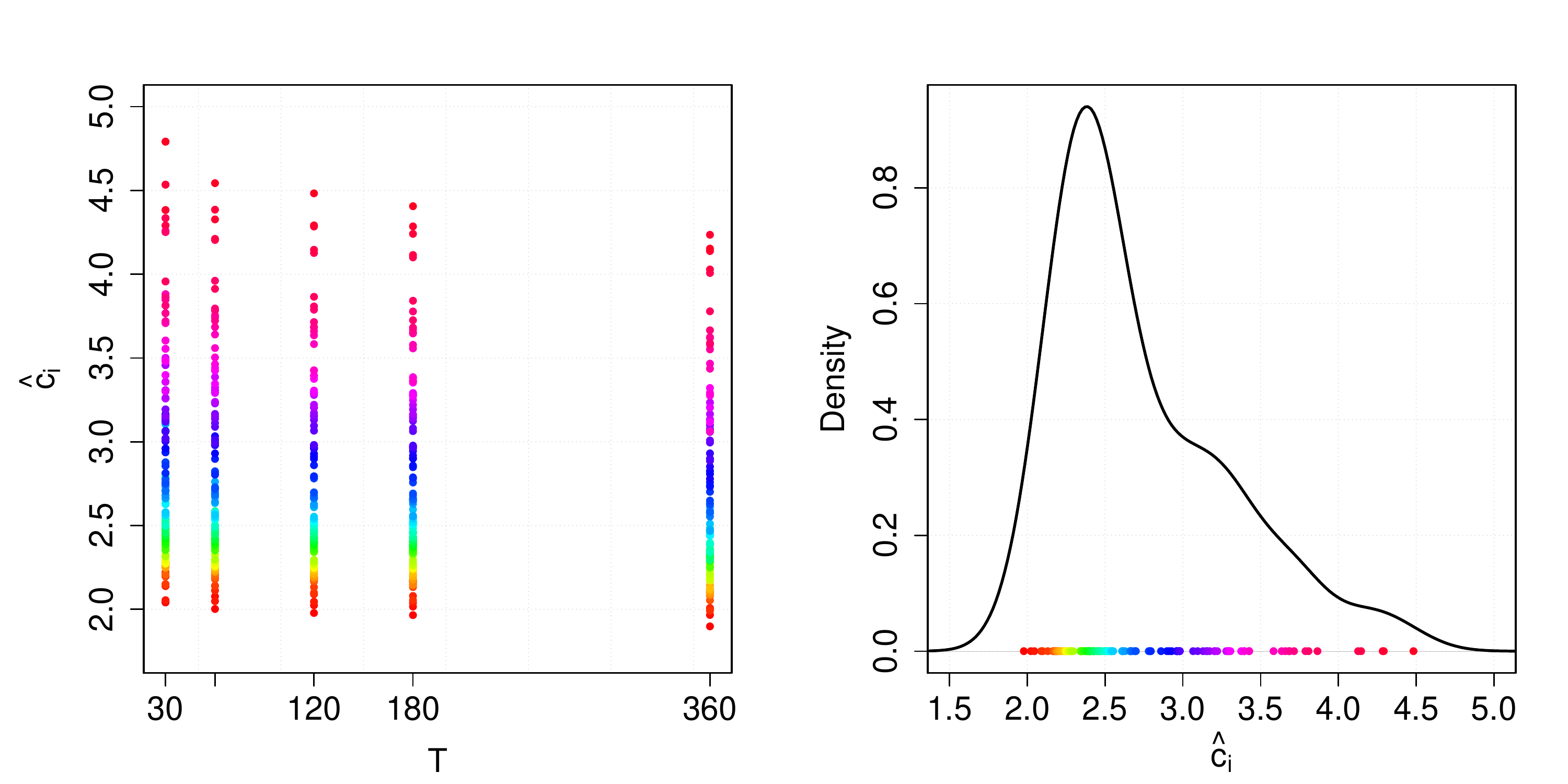}
	\caption{The left panel shows the computed values for $\hat{c}_i$ in dependence of $T\in\{30, 60, 120, 180, 360\}$ min. The right panel shows a kernel density estimate across the estimates $\hat{c}_i$ for fixed $T=120$min.}
\label{fig:rainbow}
\end{center}
\end{figure}

In order to check the validity and universality of the $3/2$-law, $N^{3/2} = c \cdot \sigma P V / C$ (or equivalently of the relation $N = c^2 \cdot (\sigma P / S)^2$), we examine the variation of the constant $c$ across assets and interval lengths. Hence, we do not rely on the estimators $\hat{\gamma}_i$ computed in Section \ref{ssec:Versus}. Instead, we compute for a fixed interval length $T$ the quantity
\begin{align*}
	\hat{c}_i = n^{-1} \sum_{j=1}^n\frac{C_{ij} N_{ij}^{3/2}}{\hat{\sigma}_{ij} P_{ij} V_{ij}}  =n^{-1} \sum_{j=1}^n\frac{N_{ij}^{1/2}}{\hat{\sigma}_{ij}}\frac{S_{ij} }{P_{ij}},\quad\text{for}\quad i=1,\ldots,d,
\end{align*}	
where $n$ is the number of non-overlapping time intervals with equal length $T$. The left panel of Figure~\ref{fig:rainbow} shows the estimates $\hat{c}_i$ for different values of $T$. Note that the  rainbow-color-code refers to the ordered values of $\hat{c}_i$ for $T=120$min. As we recover the same rainbow-pattern also for the other interval lengths $T \in \lbrace 30, 60, 180, 360 \rbrace$ min, we can conclude that there is little variation of the estimates $\hat{c}_i$ for a fixed stock $i$ across different interval lengths $T$. This small variation of $\hat{c}_i$ for fixed $i$ and varying $T\in\{30, 60, 120, 180, 360\}$ min endows the 3/2-law with a certain degree of universality. However, the present cross-sectional dispersion in $\hat{c}_i$ across different assets $i$, i.e., the fact that depending on the considered stock the estimates $\hat{c}_i$ range from two to five, does not allow awarding the 3/2-law with strong universality. Thus, we draw the same conclusion as Benzaquen et al.~\cite{benzaquen2016unravelling} that the 3/2-law holds with weak universality. For completeness, the kernel density estimate in the right panel of Figure~\ref{fig:rainbow} illustrates the distribution of the estimates $\hat{c}_i$, $i=1,\ldots,d$ for $T=120$min.

\section{A closer look on volatility} \label{sec:VolaScaling}
We have seen that the volatility $\sigma$ plays a dominant role in explaining the trading activity $N$. 
The squared volatility $\sigma^2$ of a given stock during a fixed interval $[t,t+T]$ was defined as the variance of the change of the log-price 
\begin{align}
\label{eq:defvola}
\sigma^2 := \mathbb{V}\text{ar} \left(\log(P_{t+T}) - \log(P_t)\right).
\end{align}
When specifying the definition of $\sigma^2$ in this way we had in mind the Black-Scholes model,
\begin{align}
\label{eq:Black-Scholes}
dP_t = P_t \left( \sigma dW_t + \mu dt \right),
\end{align}
where, fixing the normalization $T=1$, formula $\eqref{eq:defvola}$ indeed recovers the constant $\sigma$ in \eqref{eq:Black-Scholes}. Going beyond Black-Scholes, consider a price process of the form
\begin{align}
\label{eq:volProcess}
P_t = P_0 \exp \left( \int_0^t \sigma_u dW_u \right)
\end{align}
where $(\sigma_t)_{t \geq 0}$ is an arbitrary stochastic process (satisfying suitable regularity conditions). In this case, formula \eqref{eq:defvola} should, of course, be interpreted conditionally on the sigma-algebra $\mathcal{F}_t$ and we obtain the ``Wald identity''
\begin{align}
\label{eq:F2a}
\mathbb{V}\text{ar} \left(\log(P_{t+T}) - \log(P_t) \vert  \mathcal{F}_t \right) = \mathbb{E} \left( \int_t^{t+T} \sigma^2_u du \vert \mathcal{F}_t \right).
\end{align}
This implies in particular that, as long as we are in the framework of processes of the form \eqref{eq:volProcess}, the above chosen scaling $$ [\sigma^2] = \mathbb{T}^{-1},$$ is the only reasonable choice.
 
But let us have a closer look at what we are actually doing here. The above reasoning tacitly assumes that we are starting from a \emph{stochastic model} of a price process. The present situation, however, dictates a different point of view: we start from empirical tick data observed  during the interval $[t,t+T]$. Even when we make the heroic assumption that this data is accurately modeled, e.g.~by the Black Scholes model \eqref{eq:Black-Scholes}, the number $\sigma^2$ which we plug into the formula $N=g(\sigma^2,\dots)$ can only be an \emph{estimator} of $\mathcal{\sigma}^2$ obtained from the data at hand. This implies that, strictly speaking, we should write our formulas as $N= g(\hat{\sigma}^2,\dots)$ in dependence of the \emph{estimated} squared volatility $\hat{\sigma}^2$. The gist of the argument is that for the purpose of dimensional analysis the scaling which is relevant is that of the \emph{estimator} of the volatility rather than that of the \emph{true} volatility (whatever this is).
To be concrete, suppose that we are given price data $(P_{t_k})_{k=1,\dots,N}$ for a grid $t\leq t_{1}<\dots<t_N \leq t+T$ in the interval $[t,t+T]$. An obvious choice for the estimator of the squared volatility, which is also used in Section \ref{sec:EmpEvi} above, is 
\begin{align}
\label{eq:quadVariation}
\hat{\sigma}^2 := \sum_{k=2}^N \left( \log(P_{t_k}) - \log(P_{t_{k-1}}) \right)^2.
\end{align}
Clearly, this estimator has the dimension $[\hat{\sigma}^2] = \mathbb{T}^{-1}$ if we suppose that the typical distance $\Delta t_k = t_{k+1}- t_k$ (in absolute terms) does not depend on whether we measure time in seconds or in minutes. Hence, for the estimator $\hat \sigma^2$, the hypothesis $[\hat{\sigma}^2] = \mathbb{T}^{-1}$ underlying the dimensional analysis in Section \ref{sec:TradInv} is satisfied.

However, we can also think of other estimators. Fix $H\in(0,1)$ and define the estimator $\hat{\sigma}^2(H)$ by
\begin{align}
\label{eq:pVariation}
\hat{\sigma}^2(H) := \left( \sum_{k=2}^N \vert \log(P_{t_k}) - \log(P_{t_{k-1}}) \vert^{1/H} \right)^{2H}.
\end{align}
To motivate this estimator, consider the model 
\begin{align}
\label{eq:fBMmodel}
P_t = P_0 \exp(\sigma W_t^H ), \qquad t \geq 0,
\end{align}
where $\sigma>0$ is a fixed number and $(W_t^H)_{t\geq 0}$ is a \emph{fractional} Brownian motion with Hurst parameter $H$, starting at $W_0^H=0$. In this case, the estimator $\hat{\sigma}^2(H)$ in \eqref{eq:pVariation} is a consistent estimator for the parameter $\sigma^2$ in \eqref{eq:fBMmodel}. But the estimator $\hat{\sigma}^2(H)$ now scales differently in time than the quadratic estimator $\hat{\sigma}^2$  (see~\cite{coutin2007introduction,pratelli2011remark}), namely \begin{align} \label{eq:dimensionT2H}
[\hat{\sigma}^2(H)] = \mathbb{T}^{-2H}. 
\end{align}
Models for the price process $(P_t)_{t\geq 0}$ involving fractional Brownian motion as in \eqref{eq:fBMmodel} have been proposed, notably by B. Mandelbrot, already more than 50 years ago~\cite{mandelbrot1963variation,mandelbrot1968fractional} and there may be good reasons not to rule them out a priori.
\vspace{5mm}

Here is another example where a sub-diffusive behavior of the price process $(P_t)_{t^\geq 0}$ occurs, due to a micro-structural effect: the discrete nature of the prices in the real world (compare Benzaquen et al.~\cite{benzaquen2016unravelling}; we thank Jean-Philippe Bouchaud for bringing this phenomenon to our attention). To present the idea in its simplest possible form, suppose that the price process $(\check{P}_t)_{t\geq 0}$ is given by $$ \log(\check{P}_t) = \text{int}(W_t),$$ where $(W_t)_{t \geq 0}$ is a standard Brownian motion and $\text{int}(x)$ denotes the integer closest to the real number $x$, i.e., $\text{int}(x) = \sup \lbrace n \in \mathbb{Z}: n \leq x + 0.5 \rbrace$.
Fix again an interval $[t,t+T]$ and consider the quantity 
\begin{align*}
\check{\sigma}^2 = (\check{\sigma}^2)_t^{t+T} = \mathbb{V}\text{ar} \left( \log(\check{P}_{t+T}) - \log(\check{P}_t)\right).
\end{align*}
For small $T>0$, we show in Appendix \ref{sec:RoundedBM} that
$$  (\check{\sigma}^2)_t^{t+T}  \approx \text{c }\sqrt{T},$$ for some constant $c>0$.
Hence, if the interval length $T$ is sufficiently small, we recover that $[\check{\sigma}^2 ] = \mathbb{T}^{-1/2}$, rather than the usual scaling in the dimension time, i.e., $\mathbb{T}^{-1}$.
 
This observation indicates, that if the interval length $T$ is small compared to the width of the price grid, i.e., the tick value, we observe a sub-diffusive behavior of the price process even if the ``efficient'', unobserved price process is assumed to be a diffusion. We refer to Robert and Rosenbaum~\cite{robert2010new} for a detailed discussion of how to account for the discrete nature of prices. For now, this rough argument should only serve as motivation that there might be plenty of reasons why the scaling $[\sigma^2] = \mathbb{T}^{-1}$ is, in practical situations, not as clearly granted as it might seem at first glance.

For all these reasons we drop in this section the convenient dimensional assumption $[\sigma^2] = \mathbb{T}^{-1}$ and replace it by the subsequent more general assumption.
\begin{Hassumption}
There is $H \in (0,1)$ such that the squared volatility estimator $\hat{\sigma}^2(H)$ has dimension 
$$[\hat{\sigma}^2(H)] = \mathbb{T}^{-2H}.$$
\end{Hassumption}
\begin{proposition}[$(1+H)$-law]
\label{prop:1+Hlaw}
Suppose that the ``Leverage Neutrality Assumption''  as well as the ``$H$-Assumption'' hold true and that the number of trades $N$ depends \emph{only} on the four quantities $\hat{\sigma}^2(H), P, V$ and $C$, i.e.,
\begin{align*}
	N &= g(\hat{\sigma}^2(H),P,V,C),
\end{align*}
 where the function $g:\mathbb{R}_+^4\rightarrow\mathbb{R}_+$ is \emph{dimensionally invariant} and \emph{leverage neutral}. Then, there is a constant $c>0$ such that the number of trades $N$ obeys the relation
\begin{align}\label{eq:1+Hlaw}
	N^{1+H} =c\, \cdot\, \frac{\hat{\sigma}(H) PV}{C}.
\end{align}
\end{proposition}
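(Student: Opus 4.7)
The plan is to mimic the proof of Theorem \ref{thm:3/2law} with only one change in the dimensional bookkeeping: the entry of $\mathbb{T}$ for the volatility variable. Explicitly, the analogue of Table \ref{tab:Dimensions} has rows
\begin{align*}
\mathbb{S}&:\; (0,-1,1,0\,|\,0), \qquad \mathbb{U}:\; (0,1,0,1\,|\,0), \\
\mathbb{T}&:\;(-2H,0,-1,0\,|\,-1), \qquad \mathbb{M}:\;(2,-1,0,0\,|\,0),
\end{align*}
where the columns refer to $\hat{\sigma}^2(H), P, V, C$ and, after the bar, to $N$. The only alteration with respect to Table \ref{tab:Dimensions} is that the first column of the $\mathbb{T}$-row now reads $-2H$ instead of $-1$, reflecting the $H$-Assumption.

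First I would make the power-law ansatz $g(\hat{\sigma}^2(H),P,V,C) = c\cdot(\hat{\sigma}^2(H))^{y_1}P^{y_2}V^{y_3}C^{y_4}$ and translate dimensional invariance with respect to $\mathbb{S}, \mathbb{U}, \mathbb{T}$ together with leverage neutrality into the four linear equations
\begin{equation*}
-y_2+y_3 = 0, \qquad y_2+y_4 = 0, \qquad -2Hy_1-y_3 = -1, \qquad 2y_1-y_2 = 0.
\end{equation*}
A direct substitution gives the unique solution
\begin{equation*}
y_1 = \frac{1}{2(H+1)}, \qquad y_2 = y_3 = \frac{1}{H+1}, \qquad y_4 = -\frac{1}{H+1},
\end{equation*}
and raising both sides of the resulting identity to the power $H+1$ yields exactly \eqref{eq:1+Hlaw}.

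It remains to establish that no other functional form for $g$ is compatible with the invariances, and this is where the proof of Theorem \ref{thm:3/2law} can be quoted almost verbatim. Passing to logarithmic coordinates $X_1,\dots,X_4$ and writing $\log(N) = G(X_1,X_2,X_3,X_4)$, the rows of the new table become four vectors $r_1,\dots,r_4 \in \mathbb{R}^4$; the first, second and fourth impose that $G$ is constant along the corresponding lines, whereas the third (the only row with a nonzero entry on the $N$-side) pins down the value of $G$ along its direction up to an additive constant. Since the new $r_3 = -2H e_1 - e_3$ differs from its predecessor only by a nonzero rescaling of its first coordinate, the same linear-algebra argument shows that $r_1,r_2,r_3,r_4$ still span $\mathbb{R}^4$ (the determinant of the matrix they form is a nonzero multiple of the original one, nonvanishing for every $H\in(0,1)$), so $G$ is uniquely determined up to the constant $G(0,0,0,0)$. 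The only step that genuinely requires care is this spanning check, which ensures that the $H$-dependent dimension does not degenerate the system; once it is verified, uniqueness and hence \eqref{eq:1+Hlaw} follow.
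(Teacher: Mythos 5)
Your proposal is correct and takes essentially the same route as the paper: the paper's appendix proof sets up exactly the same $4\times 4$ system (with the $\mathbb{T}$-row entry $-2H$ in the volatility column), obtains the unique solution $y=\tfrac{1}{1+H}\left(\tfrac12,1,1,-1\right)^{\top}$, and concludes via the Pi-theorem, which is what your explicit ansatz-plus-uniqueness argument reproduces by hand. Your spanning check is the one step worth verifying, and it does hold: the determinant of the matrix formed by $r_1,\dots,r_4$ equals $2(H-1)$, which is nonzero for every $H\in(0,1)$.
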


The proof is analogous to the proof of Theorem \ref{thm:3/2law} and is given in Appendix \ref{sec:Proofs}. 

The hypothesis of the above proposition assumes that $H \in (0,1)$ is known a priori. As $H$ is typically unknown in practical applications, we can therefore ask the following question: For which $H$ does relation \eqref{eq:1+Hlaw} fit the empirical data best? We address this question in the following subsection.

\subsection{Empirical evidence under the $H$-Assumption}

\begin{figure}
\begin{center}
        \includegraphics[scale=0.45]{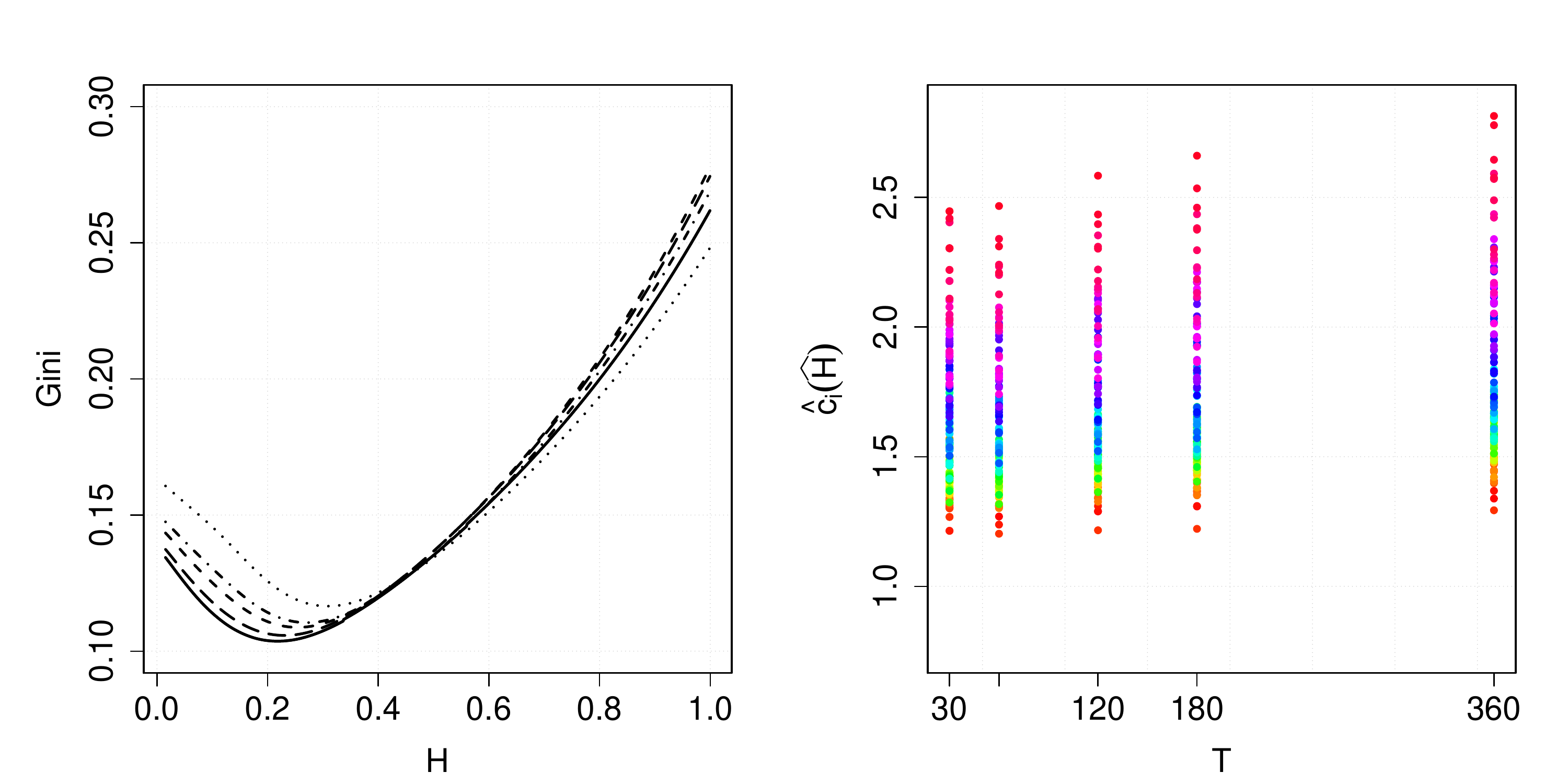}
	\caption{The left panel illustrates the Gini-coefficient in dependence of $H$ for $T=30$min (solid), $T=60$min (long-dashed), $T=120$min (dashed), $T=180$min (dashed-dotted) and $T=360$min (dotted). The right panel shows the computed values for $\hat{c}_i(\widehat{H})$ such that $\widehat{H}$ minimizes the Gini-coefficient for fixed $T\in\{30, 60, 120, 180, 360\}$ min.}
\label{fig:rainbowH}
\end{center}
\end{figure}

According to arguments from dimensional analysis, the constant $c$ and the parameter $H$ from Equation~\eqref{eq:1+Hlaw} should at best be identical for all stocks and all interval lengths $T$. The empirical results above, however, have revealed cross-sectional dispersion which might be related to the restrictive assumption $[\hat{\sigma}^2]=\mathbb{T}^{-1}$. This restriction motivates the empirical exercise of this section: Can we determine an $H\in(0,1)$ in \eqref{eq:1+Hlaw} that minimizes the cross-sectional dispersion across the estimates of $c$?

Following Proposition \ref{prop:1+Hlaw}, we therefore compute the estimates $\hat{c}_i(H)$ for different $H$ as
\begin{align*}
	\hat{c}_i(H) = n^{-1} \sum_{j=1}^n\frac{N_{ij}^{1+H}C_{ij} }{\hat{\sigma}_{ij}(H) P_{ij} V_{ij}} =  n^{-1} \sum_{j=1}^n\frac{N_{ij}^{H}}{\hat{\sigma}_{ij}(H)}\frac{S_{ij} }{P_{ij}},\quad\text{for}\quad i=1,\ldots,d,
\end{align*}	
where $\hat{\sigma}_{ij}^2(H)$ is defined in \eqref{eq:pVariation}, $H \in (0,1)$.
Both variables $N_{ij}^{H}$ and $\hat{\sigma}_{ij}(H)$ increase as $H$ increases, so that it is not obvious how $\hat c_i(H)$ behaves when $H$ increases.
We find empirically that overall the constant $\hat{c}_i(H)$ typically increases in $H$. Addressing the above question therefore requires a scale invariant measure for the variation in $\hat{c}_i(H)$ such as the Gini-coefficient which is given by
\begin{align*}
	\mathcal{G}(x_1,\ldots,x_n) &=\frac{2 \sum_{i=1}^n i x_{[i]}}{(n-1)\sum_{i=1}^nx_{[i]}}-\frac{n+1}{n-1},
\end{align*}
for the ordered data $x_{[1]}<x_{[2]}<\ldots<x_{[n]}$. Note that the Gini-coefficient $\mathcal{G}(x_1,\ldots,x_n)\in[0,1]$ is interpreted as a measure for inequality. If all values $x_1,\ldots,x_n$ are equal, $\mathcal{G}$ equals zero. In case of strong heterogeneity in $x_1,\ldots,x_n$ the Gini-coefficient approaches one.\footnote{The coefficient of variation defined as the ratio of the standard deviation to the sample average could be employed as an alternative to the Gini-coefficient. The presented results are widely robust with respect to the chosen measure of standardized dispersion.}

Now, we minimize the Gini-coefficient of $\left(\hat{c}_i(H) \right)_{i=1,\dots,d}$ with respect to $H$ in order to find
\begin{align*}
	\widehat{H} = \underset{H\in(0,1)}{\text{arg min}}\;\mathcal{G}(\hat{c}_1(H),\ldots,\hat{c}_n(H)).
\end{align*}
The left panel of Figure~\ref{fig:rainbowH} plots the Gini-coefficient in dependence of $H$ for different interval length $T$. We roughly find that $\widehat{H}=0.22$ for $T=30$min, $\widehat{H}=0.23$ for $T=60$min, $\widehat{H}=0.25$ for $T=120$min, $\widehat{H}=0.27$ for $T=180$min and $\widehat{H}=0.31$ for $T=360$min.
The rainbow-color-code of Figure~\ref{fig:rainbow} has been transferred to the right panel of Figure~\ref{fig:rainbowH}. In contrast to Figure~\ref{fig:rainbow} yet, we present the quantities $\hat{c}_i(\widehat{H})$ in dependence of the optimal $\widehat{H}$ for the given interval length $T$. In case $T=120$min for instance, the estimates $\hat{c}_i(H=0.25)$ range from 1.2 to 2.6 for different assets $i$. On an absolute scale, the variation seems to be smaller compared to Figure~\ref{fig:rainbow}, where the estimates $\hat{c}_i(H=0.5)$ lie between 2 and 4.5 for the same interval length $T=120$min. In relative terms though, the difference between the variation in $\hat{c}_i(H=0.25)$ and $\hat{c}_i(H=0.5)$ is not so significant, as $\mathcal{G}\left(\hat{c}_1(H=0.25),\dots,\hat{c}_n(H=0.25) \right) = 0.11$ compared to $\mathcal{G}\left(\hat{c}_1(H=0.5),\dots,\hat{c}_n(H=0.5) \right) = 0.14$ for $T=120$min.

For now, we can only speculate on reasons why the optimal $\widehat{H}$ is strikingly smaller than $1/2$ for all interval lengths $T$. The quantity $\hat{c}_i(H)$ relies on tick-by-tick data, so that an
obvious explanation for these unexpected optimal values of $H$ are market microstructure effects. To be more concrete, Benzaquen et al.~\cite{benzaquen2016unravelling} observe similar to our results a sub-diffusive behavior for so called large tick future contracts. Large tick assets are defined such that their bid-ask spread is almost always equal to one tick, see e.g.~\cite{eisler2012price}. Most of the stocks in our sample can be categorized as large tick stocks based on this definition.

When referring to market microstructure effects, however, it deserves to be stressed that the value $H=1/2$ is implied by numerous models for the efficient price process $(P_t)_{t\geq0}$, which are backed by empirical evidence and take market microstructure effects into account. Hence, the scaling of the squared volatility through time implied by $H=1/2$ seems suitable in many applications. We also note that the Gini-coefficient $\mathcal{G}$ in Figure~\ref{fig:rainbowH} does not vary drastically when $H$ ranges between the optimal $\widehat H \approx 0.25$ and the traditional $H=1/2$, namely roughly between $\mathcal{G}= 0.12$ and $\mathcal{G}= 0.15$.
Hence, the value of $H$ does not seem to play a very significant role in explaining the heterogeneity of the value of $\hat c_{ij}(H)$.
Nevertheless, a better understanding of the behavior of $\widehat H$ seems to us a challenging topic for future research.

\section{Conclusion}

Finding laws relating the trading activity (defined here as the number of trades $N$ within a given time interval) to other relevant market quantities has been the subject of numerous investigations.
The earliest contribution dating as far back as the beginning of the 1970s. Two decades later, Jones et al.~\cite{jones1994transactions} suggested the relation $N\sim \sigma^2$ based on an extensive empirical study.
Other landmark contributions include the relation $N \sim (\sigma P/S)^2$ of Madhavan et al.~\cite{madhavan1997why} resp. Wyart et al.~\cite{wyart2008relation} and the so called $3/2$-law $N^{3/2}\sim \sigma P V/C$ of Benzaquen et al.~\cite{benzaquen2016unravelling}, which were obtained using market microstructure arguments and supported by empirical evidence.
In the first part of the paper we show that all these scaling laws can be derived using arguments relying on dimensional analysis.
The relation $N\sim \sigma^2$ follows from the assumption that $N$ is fully explained by the squared volatility $\sigma^2$, the asset price $P$ and the traded volume $V$, and the assumption that the relation between these quantities is invariant under changes of the dimensions shares $\mathbb{S}$, time $\mathbb{T}$ and money $\mathbb{U}$.
The somewhat refined relation $N^{3/2}\sim \sigma PV/C$ is obtained when assuming that $N$ depends only on $\sigma^2, P, V$ and the cost of trading $C$, and assuming in addition, that an invariance principle known as ``Leverage Neutrality'' holds true. 
This ``Leverage Neutrality Assumption'' can be seen as a no-arbitrage condition enabling us to obtain a unique functional relation from the assumption $N= g(\sigma^2, P, V, C)$.
Substituting the quantity $C$ by the bid-ask spread $S$ in the latter assumption, we derive the relation $N \sim (\sigma P/S)^2$, which is shown to be equivalent to the $3/2$-law.
Alternatively, we can consider the volatility of the \emph{relative} price change instead of the \emph{absolute} price change, i.e., assume $N = g(\sigma^2P^2, V, C)$ resp.~$N = g(\sigma^2P^2, V, S)$.
This assumption simplifies the analysis in that a unique solution for $g(\cdot,\cdot,\cdot)$ can be obtained without recourse to the ``Leverage Neutrality Assumption''.
Since our \emph{theoretical} analysis relies on a set of well-defined, but not necessarily realistic assumptions, the validity of any of the aforementioned scaling laws needs to be confirmed  through an empirical analysis.

Based on data from the NASDAQ stock exchange, we provide empirical evidence that the $3/2$-law $N^{3/2}=c\cdot \sigma PV/C$ (or equivalently $N = c^2\cdot (\sigma P/S)^2$) fits the data clearly better than $N\sim \sigma^2$.
In fact, the $3/2$-law holds for a fixed asset and a fixed interval length.
However, the estimated value of the constant $c$  strongly depends on the considered asset.
In the language of Benzaquen et al.~\cite{benzaquen2016unravelling}, this means that the $3/2$-law holds with weak universality.

Finally, we note that both our theoretical and empirical analysis relied on the assumption that the scaling of $\sigma^2$ is inversely proportional to time $\mathbb{T}$.
This hypothesis is clearly debatable as it tacitly assumes diffusive price behaviors, and ignores e.g.~the discrete nature of prices.
A closer look at the scaling of $\sigma^2$ suggests the scaling $[\sigma^2] = \mathbb{T}^{-2H}$ for some $H \in (0,1)$ that can be seen e.g.~as the Hurst parameter of a fractional Brownian motion.
Repeating our dimensional arguments, the latter scaling of $\sigma^2$ yields the relation $N^{1+H}\sim \sigma^2PV/C$.
An essential drawback of this more general situation is that the parameter $H$ is unknown.
We formulate an optimality criterion for the choice of $H$.
It should yield the most homogeneous estimates for the proportionality coefficients $\hat c_i(H)$.
A preliminary analysis implies that, on average, the optimal $\widehat H$ is of the order $0.25$, i.e., quite different from the assumption $H= 0.5$.
Although the overall effect of this passage from $H=0.5$ to $\widehat H \approx 0.25$ turns out to have only mild effects on the issue of universality of the corresponding laws, we believe that this phenomenon merits further investigation.

\appendix
\section{Dimensional analysis and the Pi-Theorem}
\label{sec:DAandPiTheorem}
In order to formally prove the results of Sections \ref{sec:TradInv} and \ref{sec:VolaScaling}, which in done in Appendix \ref{sec:Proofs}, we need the Pi-Theorem from dimensional analysis. For completeness, we therefore provide the following reminder of this important theorem from dimensional analysis, which can also be found in~\cite{pohl2017amazing}.  Additionally, the interested reader is referred to 
Chapter 1 of the book by Bluman and Kumei~\cite{bluman2013symmetries}
as well as to Pobedrya and Georgievskii~\cite{pobedrya2006proof} for a historical perspective and to~\cite{curtis1982dimensional} for a purely mathematical treatment of dimensional analysis. We formalize the assumptions behind dimensional analysis in proper generality. However, for the purpose of the present paper we shall only need the degree of generality covered by Corollaries~\ref{thm:pi,k=0} and \ref{thm:pi,k=1} below.
\vspace{5mm}

\begin{assumption}[Dimensional analysis]\label{ass:DimAnal} ~ \begin{enumerate}[(i)]
\item Let the quantity of interest $U \in \mathbb{R}_+$ depend on $n$ quantities $W_1,\dots,W_n \in \mathbb{R}_+$, i.e.,
\begin{align}
	U = h(W_1,W_2,\dots,W_n), \label{ProblemGeneral}
\end{align}
for some function $h:\mathbb{R}_+^n \to \mathbb{R}_+$.
\item The quantities $U,W_1,\dots,W_n$ are measured in terms of $m$ fundamental dimensions labelled $L_1,\dots,L_m$, where $m \leq n$. For any positive quantity $X$, its dimension $[X]$ satisfies $[X] = L^{x_1}_1\cdots L^{x_m}_m$ for some $x_1,\dots,x_m \in \mathbb{R}$. If $[X]=1$, the quantity $X$ is called \emph{dimensionless}.

The dimensions of the quantities $U,W_1,W_2,\dots,W_n$ are known and given in the form of vectors $a$ and $b^{(i)}\in\mathbb{R}^m$, $i = 1, \dots, n$, satisfying $[U] = L_1^{a_1}\cdots L_m^{a_m}$ and $[W_i] = L_1^{b_{1i}}\cdots L_m^{b_{mi}}$, $i = 1,\dots,n$.
Denote by $B = (b^{(1)},b^{(2)}, \dots,b^{(n)})$ the $m \times n$ matrix with column vectors $b^{(i)}= (b_{1i},\dots, b_{mi})^\top$, $i=1,\ldots,n$.
\item For the given set of fundamental dimensions  $L_1,\dots,L_m$, a \emph{system of units} is chosen in order to measure the value of a quantity.  A change from one system of units to another amounts to rescaling all considered quantities. In particular, dimensionless quantities remain unchanged and formula \eqref{ProblemGeneral} is invariant under arbitrary scaling of the fundamental dimensions.
\end{enumerate}
\end{assumption}
We can now state the main result from dimensional analysis (see~\cite{bluman2013symmetries}).
\begin{theorem}[Pi-Theorem]
\label{thm:pi}
Under Assumption \ref{ass:DimAnal}, let $x^{(i)} := (x_{1i},\dots,x_{ni})^\top$, $i=1,\dots,k:= n-\operatorname{rank}(B)$ be a basis of the solutions to the homogeneous system $Bx=0$ and $y:= (y_1,\dots,y_n)^\top$ a solution to the inhomogeneous system $By = a$ respectively. Then, there is a function $f:\mathbb{R}_+^k \to \mathbb{R}_+$ such that 
\begin{align*}
	U  \cdot W_1^{-y_1} \cdots W_n^{-y_n} =  f(\pi_1,\dots,\pi_k),
\end{align*}
where $\pi_i := W_1^{x_{1i}} \cdots W_n^{x_{ni}}$ are dimensionless quantities, for $i=1,\dots,k$.
\end{theorem}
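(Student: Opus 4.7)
The plan is to work in logarithmic coordinates, where every monomial becomes linear, and to perform a change of basis that separates the dimensionless combinations $\pi_1,\dots,\pi_k$ from the remaining "dimensional" degrees of freedom. The invariance of $h$ under rescaling the fundamental dimensions will then force the function to be constant in those remaining coordinates.

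First I would verify the two dimensional-consistency facts. Since $Bx^{(i)}=0$, each $\pi_i$ has dimension $\prod_j L_j^{(Bx^{(i)})_j}=1$, so the $\pi_i$'s are indeed dimensionless. Since $By=a$, the product $\prod_l W_l^{y_l}$ has dimension $\prod_j L_j^{a_j} = [U]$, so the quotient $\tilde U := U\,W_1^{-y_1}\cdots W_n^{-y_n}$ is dimensionless. Setting $\tilde h(W_1,\dots,W_n) := h(W_1,\dots,W_n)\,W_1^{-y_1}\cdots W_n^{-y_n}$, we have $\tilde U = \tilde h(W_1,\dots,W_n)$, and the goal reduces to showing that $\tilde h$ depends on its arguments only through $\pi_1,\dots,\pi_k$.

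Now for the change of variables. Since $x^{(1)},\dots,x^{(k)}$ span $\ker B$, which has dimension $k = n - r$ with $r := \operatorname{rank}(B)$, I would extend it to a basis of $\mathbb{R}^n$ by adjoining vectors $z^{(1)},\dots,z^{(r)}$ complementary to $\ker B$, and introduce the auxiliary monomials $\xi_j := W_1^{z_{1j}}\cdots W_n^{z_{nj}}$. In logarithmic coordinates, the map $(\log W_1,\dots,\log W_n) \mapsto (\log\pi_1,\dots,\log\pi_k,\log\xi_1,\dots,\log\xi_r)$ is linear and invertible, so we may regard $\tilde h$ as a function $F(\pi_1,\dots,\pi_k,\xi_1,\dots,\xi_r)$. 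Because the $z^{(j)}$'s form a complement to $\ker B$, the vectors $Bz^{(1)},\dots,Bz^{(r)}$ are linearly independent in $\mathbb{R}^m$. Under a rescaling $L_j \mapsto \lambda_j L_j$, each $\pi_i$ is unchanged, while $\xi_j$ is multiplied by $\prod_l \lambda_l^{(Bz^{(j)})_l}$; in log-coordinates this is a linear map $\mathbb{R}^m \to \mathbb{R}^r$ whose matrix has rank $r$, hence is surjective. Therefore any prescribed rescaling $\xi_j \mapsto \mu_j \xi_j$ with $\mu_j>0$ can be realized by an appropriate choice of $\lambda_1,\dots,\lambda_m$. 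Invoking invariance (iii), $\tilde U$ is unchanged under such a rescaling, so
\begin{equation*}
F(\pi_1,\dots,\pi_k,\xi_1,\dots,\xi_r) = F(\pi_1,\dots,\pi_k,\mu_1\xi_1,\dots,\mu_r\xi_r)
\end{equation*}
for all $\mu_1,\dots,\mu_r>0$. Hence $F$ is independent of the $\xi_j$'s, and defining $f(\pi_1,\dots,\pi_k)$ as this common value yields the claimed representation.

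The main obstacle is the surjectivity step: justifying that arbitrary rescalings of the $\xi_j$'s can be produced by rescaling the fundamental dimensions. This is ultimately a rank-nullity statement about $B$, but it must be set up carefully so that the complementary basis $\{z^{(j)}\}$ is chosen precisely to make $\{Bz^{(j)}\}$ linearly independent in the column space of $B$. Once this linear-algebraic fact is in hand, the remainder of the argument is a bookkeeping exercise in the log-linear change of coordinates, and no further analytic input beyond Assumption~\ref{ass:DimAnal} is needed.
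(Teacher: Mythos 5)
Your proof is correct. Note, however, that the paper does not actually prove Theorem~\ref{thm:pi}: it states the Pi-Theorem with a citation to Bluman and Kumei and only uses the special cases $k=0$ and $k=1$ (Corollaries~\ref{thm:pi,k=0} and \ref{thm:pi,k=1}), so there is no in-paper argument to compare against line by line. What you give is the standard proof of the Buckingham Pi-Theorem, and all the key steps check out: $B x^{(i)}=0$ makes the $\pi_i$ dimensionless and $By=a$ makes $\tilde U = U\,W_1^{-y_1}\cdots W_n^{-y_n}$ dimensionless; extending $x^{(1)},\dots,x^{(k)}$ by $z^{(1)},\dots,z^{(r)}$ to a basis of $\mathbb{R}^n$ makes the log-linear change of coordinates invertible; and the crucial surjectivity step is sound, since any linear relation $\sum_j c_j Bz^{(j)}=0$ puts $\sum_j c_j z^{(j)}$ in $\ker B$ as well as in its complement, forcing $c=0$, so the matrix with rows $(Bz^{(j)})^\top$ has rank $r$ and every rescaling $\xi_j\mapsto \mu_j\xi_j$ is realized by some choice of the $\lambda$'s. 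Your argument is in fact the natural generalization of the paper's own direct proof of Theorem~\ref{thm:3/2law}, which passes to logarithmic coordinates and shows that $G$ is constant along lines parallel to the rows of Table~\ref{tab:Dimensions} and affine in the remaining direction; in the full-rank case those row directions span $\mathbb{R}^n$ and your $\xi$-coordinates exhaust all degrees of freedom, which is exactly why the paper's special case yields a unique monomial while your general statement leaves an undetermined function $f$ of the $k$ residual dimensionless combinations.
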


We shall only need the special cases $k=0$ and $k=1$, which are spelled out in the two subsequent corollaries.

\begin{corollary}
\label{thm:pi,k=0}
Under Assumption~\ref{ass:DimAnal}, suppose that $\operatorname{rank}(B)=n$ and let $y:=(y_1,\dots, y_n)^\top$ be the unique solution to the linear system $By = a$. Then there is a constant $\operatorname{const}>0$ such that 
\begin{align*}
U = \operatorname{const} \cdot\,W_1^{y_1} \cdots W_n^{y_n}.
\end{align*}
\end{corollary}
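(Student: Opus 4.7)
The plan is to derive this corollary as a direct specialization of the Pi-theorem (Theorem \ref{thm:pi}) to the degenerate case $k=0$. With $k := n - \operatorname{rank}(B)$, the hypothesis $\operatorname{rank}(B) = n$ forces $k = 0$, so the homogeneous system $Bx = 0$ admits only the trivial solution and no nontrivial dimensionless groups $\pi_i$ arise.

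First I would verify the linear-algebraic prerequisites. Injectivity of $B \colon \mathbb{R}^n \to \mathbb{R}^m$, equivalent to $\operatorname{rank} B = n$, ensures that whenever $By = a$ is solvable, its solution is unique. Solvability itself is implicit in the dimensional consistency of $U = h(W_1, \dots, W_n)$: if $a$ were not in the range of $B$, no monomial $W_1^{y_1} \cdots W_n^{y_n}$ could carry the dimension of $U$, contradicting the rescaling invariance postulated in Assumption~\ref{ass:DimAnal}(iii). Hence a unique $y$ exists and is fully determined by $B$ and $a$.

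Then I would invoke the Pi-theorem. With $k = 0$ the function $f$ has an empty argument list and thus reduces to a positive constant. The theorem yields
\begin{equation*}
U \cdot W_1^{-y_1}\cdots W_n^{-y_n} \;=\; \operatorname{const},
\end{equation*}
from which the claim follows by rearrangement.

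The only subtle point is justifying that the Pi-theorem accommodates the degenerate case $k = 0$. If one wished to avoid relying on this as a black box, a self-contained route would be to mimic the proof of Theorem~\ref{thm:3/2law}: write $\mu_j := \log \lambda_j$ for the scalings $L_j \mapsto \lambda_j L_j$, take logarithms in Assumption~\ref{ass:DimAnal}(iii), and obtain the functional equation $H(w - B^\top \mu) = u - a^\top \mu$ for all $\mu \in \mathbb{R}^m$, where $H := \log \circ\, h \circ \exp$, $u := \log U$, and $w := (\log W_i)_i$. Since $\operatorname{rank} B^\top = n$, one may choose $\mu$ with $B^\top \mu = w$; substituting and using $a^\top \mu = y^\top B^\top \mu = y^\top w$ gives $u = y^\top w + H(0)$, which is the claimed log-linear relation with $\operatorname{const} = \exp H(0)$. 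I expect this direct verification, rather than the Pi-theorem itself, to be the cleanest way to handle the $k = 0$ case without ambiguity.
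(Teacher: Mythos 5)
Your proposal is correct and takes essentially the same route as the paper, which states this corollary as the $k=0$ specialization of the Pi-Theorem (Theorem \ref{thm:pi}) without further argument; your observation that $f$ then has an empty argument list and degenerates to a positive constant, together with the remark that solvability of $By=a$ is forced by dimensional consistency, fills in exactly the (small) gaps the paper leaves implicit. Your self-contained fallback via the functional equation $H(w - B^\top\mu) = u - a^\top\mu$ is also sound and is in effect a compact, coordinate-free version of the direct uniqueness argument the paper gives for Theorem \ref{thm:3/2law}, so nothing further is needed.
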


\begin{corollary}
\label{thm:pi,k=1}
Under Assumption~\ref{ass:DimAnal}, suppose that $\operatorname{rank}(B) = n - 1$ and let $x:= (x_1,\dots, x_n)^\top$ and $y:=(y_1,\dots, y_n)^\top$ be non-trivial solutions to the homogeneous and inhomogeneous systems $Bx = 0$ and $By = a$ respectively. Then there is a function $f: \mathbb{R}_+ \rightarrow \mathbb{R}_+$ such that
\begin{align*}
U = f(W_1^{x_{1}} \cdots W_n^{x_n}) W_1^{y_1} \cdots W_n^{y_n}.
\end{align*}
\end{corollary}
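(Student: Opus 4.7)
The statement is the special case $k = n-\operatorname{rank}(B) = 1$ of the Pi-Theorem (Theorem~\ref{thm:pi}), so the shortest route is to quote Theorem~\ref{thm:pi} directly: with a single independent dimensionless monomial $\pi_1 = W_1^{x_1}\cdots W_n^{x_n}$ and any particular solution $y$ of $By=a$, the theorem yields a function $f:\mathbb{R}_+\to\mathbb{R}_+$ with $U\, W_1^{-y_1}\cdots W_n^{-y_n} = f(\pi_1)$, which is exactly the claim after multiplying through. For the present paper I would prefer, however, to also include a direct self-contained derivation, because the argument is very short and mirrors the one already used in the proof of Theorem~\ref{thm:3/2law}.

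My plan is to pass to logarithmic coordinates. Write $u=\log U$, $w_i=\log W_i$, and let $G:\mathbb{R}^n\to\mathbb{R}$ satisfy $u=G(w_1,\dots,w_n)$. Assumption~\ref{ass:DimAnal}(iii) says that rescaling the fundamental dimensions $L_j$ by factors $\lambda_j = e^{\ell_j}$ multiplies $W_i$ by $\prod_j\lambda_j^{b_{ji}}$ and $U$ by $\prod_j\lambda_j^{a_j}$, which in logarithmic form reads
\begin{equation*}
G(w+B^{\top}\ell)=G(w)+\langle a,\ell\rangle,\qquad \ell\in\mathbb{R}^m.
\end{equation*}

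I would next remove the inhomogeneous term by subtracting the particular solution $y$ of $By=a$. Setting $\tilde G(w):=G(w)-\langle y,w\rangle$, a direct computation using $\langle y,B^{\top}\ell\rangle=\langle By,\ell\rangle=\langle a,\ell\rangle$ gives $\tilde G(w+B^{\top}\ell)=\tilde G(w)$ for every $\ell\in\mathbb{R}^m$. Hence $\tilde G$ is constant on every affine translate of the subspace $\operatorname{range}(B^{\top})\subset\mathbb{R}^n$. Since $\operatorname{rank}(B)=n-1$, this subspace has dimension $n-1$, and its orthogonal complement $\ker(B)$ is the one-dimensional line $\mathbb{R}x$. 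Therefore the quotient $\mathbb{R}^n/\operatorname{range}(B^{\top})$ is parametrised by the single linear functional $w\mapsto\langle x,w\rangle$, and $\tilde G(w)=F(\langle x,w\rangle)$ for some $F:\mathbb{R}\to\mathbb{R}$. Exponentiating and setting $f:=\exp\circ F\circ\log$ and $\pi_1=W_1^{x_1}\cdots W_n^{x_n}$ yields the stated representation $U = f(\pi_1)\,W_1^{y_1}\cdots W_n^{y_n}$.

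The one step that really carries the argument is the linear-algebraic identification $\ker(B)=\operatorname{range}(B^{\top})^{\perp}$ combined with $\dim\ker(B)=1$: this is what forces $\tilde G$ to depend on $w$ only through the single dimensionless combination $\langle x,w\rangle$. The choice of $y$ is not unique, but any two choices differ by an element of $\ker(B)=\mathbb{R}x$, and replacing $y$ by $y+\alpha x$ merely reparametrises $f$ by $f(\pi_1)\mapsto f(\pi_1)\pi_1^{-\alpha}$, so the representation is well defined up to this harmless rescaling. No regularity beyond that already encoded in Assumption~\ref{ass:DimAnal} is required.
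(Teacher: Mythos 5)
Your proposal is correct. The first route you mention is precisely what the paper does: Corollary~\ref{thm:pi,k=1} is stated as the special case $k=n-\operatorname{rank}(B)=1$ of Theorem~\ref{thm:pi}, and the paper offers no further proof, deferring the Pi-Theorem itself to the literature (Bluman and Kumei). Your additional self-contained derivation is a genuine supplement rather than a restatement: passing to logarithmic coordinates, subtracting the particular solution $y$ to obtain the translation-invariance $\tilde G(w+B^{\top}\ell)=\tilde G(w)$, and then using $\ker(B)=\operatorname{range}(B^{\top})^{\perp}$ with $\dim\ker(B)=1$ to conclude $\tilde G(w)=F(\langle x,w\rangle)$ is sound, and the remark that replacing $y$ by $y+\alpha x$ only reparametrises $f$ correctly accounts for the non-uniqueness of the particular solution. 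This argument is in effect an abstract, basis-free version of the uniqueness argument the paper carries out concretely in its direct proof of Theorem~\ref{thm:3/2law} (where $G$ is shown to be constant along lines parallel to the rows $r_1,\dots,r_4$ of Table~\ref{tab:Dimensions}); what your version buys is that the appendix would no longer need to rely on an external reference for the $k=1$ case, at the cost of a slightly more abstract linear-algebra formulation.
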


\section{Proofs of Sections \ref{sec:TradInv} and \ref{sec:VolaScaling}}
\label{sec:Proofs}

In this section, we provide formal arguments for the results presented in Sections \ref{sec:TradInv} and \ref{sec:VolaScaling}.
The proofs are based on Corollaries \ref{thm:pi,k=0} and \ref{thm:pi,k=1} above.

\begin{proof}[Proof of Proposition \ref{pro:naive}]
	Combining relation \eqref{eq:naive} and the dimensions of the quantities $\sigma^2,P, V$ and $N$, we obtain that the matrix $B$ as well as the vector $a$ are given by
	\begin{align*}
	B = \left( \begin{array}{ccc}
	\,\,\,\,0 & -1 &\,\,\,\, 1  \\
	 \,\,\,\, 0& \,\,\,\,1 &\,\,\,\, 0   \\
	 -1 &\,\,\,\,0 &-1  \\
	\end{array}
	\right) \quad\text{and}\quad a = \left( \begin{array}{c}
	\,\,\,\,0 \\ \,\,\,\,0 \\ -1
	\end{array} \right).
	\end{align*}
	Table \ref{tab:Dimensions} illustrates how $B$ and $a$ relate to the considered quantities and their dimensions.
	As the matrix $B$ has full rank, i.e., rank$(B) = 3$, applying Corollary \ref{thm:pi,k=0} yields
	\begin{align*}
		N= c\,\cdot\,\sigma^{2y_1} P^{y_2}V^{y_3},
	\end{align*}
	for some constant $c>0$, where $y=(y_1, y_2, y_3)^\top$ is the unique solution of the linear system $By=a$ which is given by $y=\left(1,0,0\right)^\top$.
\end{proof}

\begin{proof}[Proof of Relation \eqref{eq:UnknownFunctionf}]
	Combining relation \eqref{eq:N=g(sigma^2,P,V,C)1} and the dimensions of the quantities $\sigma^2,P, V $ and $C$ as well as $N$, the matrix $B$ as well as the vector $a$ become
	\begin{align*}
		B = \left( \begin{array}{cccc}
		\,\,\,\,0 & -1 &\,\,\,\,1 & 0   \\
 		\,\,\,\,0 & \,\,\,\,1 &\,\,\,\, 0&1   \\
 		-1 &\,\,\,\,0 &-1 &0 \\
		\end{array}
		\right) \quad\text{and}\quad a = \left( \begin{array}{c}
		\,\,\,\,0 \\ \,\,\,\,0 \\ -1
		\end{array} \right).
	\end{align*}
	The vector $x= (-1,1,1,-1)^\top$ is a solution of the homogeneous system $Bx=0$, and the vector $y = (1,0,0,0)^\top$ is a solution of the inhomogeneous system $By=a$.
	Thus, relation \eqref{eq:UnknownFunctionf} follows from Corollary \ref{thm:pi,k=1}.
\end{proof}
\begin{proof}[Proof of Theorem \ref{thm:3/2law}]
	Combining the dimensions of the quantities considered in relation \eqref{eq:N=g(sigma^2,P,V,C)THM} and the ``Leverage Neutrality Assumption'', we obtain that the matrix $B$ as well as the vector $a$ are given by
	\begin{align*}
		B = \left( \begin{array}{cccc}
		\,\,\,\,0 & -1 &\,\,\,\,1 & 0   \\
 		\,\,\,\,0 & \,\,\,\,1 &\,\,\,\, 0&1   \\
 		-1 &\,\,\,\,0 &-1 &0 \\
		\,\,\,\,\,2 & -1 & \,\,\,\,\,0 & 0  \\
		\end{array}
		\right) \quad\text{and}\quad a = \left( \begin{array}{c}
		\,\,\,\,\,0 \\ \,\,\,\,\,0 \\ -1 \\ \,\,\,\,\, 0
		\end{array} \right).
	\end{align*}
	As the matrix $B$ has full rank, i.e., rank$(B) = 4$, applying Corollary \ref{thm:pi,k=0} yields
	\begin{align*}
		N_t= c\,\cdot\, \sigma_t^{2y_1}P_t^{y_2}V_t^{y_3} C_t^{y_4},
\end{align*}
for some constant $c>0$, where $y=(y_1, y_2, y_3, y_4)^\top$ is the unique solution of the linear system $By=a$ which is given by $y=\left(1/3,2/3, 2/3,  -2/3 \right)^\top$.
\end{proof}

\begin{proof}[Proof of Corollary \ref{coro:Corollary3/2Bachelier}]
	Considering the dimensions of the quantities $\sigma_B,V,C$, we obtain that the matrix $B$ as well as the vector $a$ are given by
	\begin{align*}
	B = \left( \begin{array}{ccc}
	        -2 & \,\,\,\,1 & 0    \\
	 \,\,\,\,2 & \,\,\,\,0 & 1   \\
	        -1 &        -1 & 0  \\
	\end{array}
	\right) \quad\text{and}\quad a = \left( \begin{array}{c}
	\,\,\,\,0 \\ \,\,\,\,0 \\ -1
	\end{array} \right).
	\end{align*}
	As the matrix $B$ has full rank, i.e., rank$(B) = 3$, applying Corollary \ref{thm:pi,k=0} yields
	\begin{align*}
		N= c\,\cdot\, V^{y_1}\sigma_B^{y_2}C^{2y_3},
	\end{align*}
	for some constant $c>0$, where $y=(y_1, y_2, y_3)^\top$ is the unique solution of the linear system $By=a$ which is given by $y=\left(1/3,2/3,-2/3\right)^\top$.
	This shows \eqref{eq:3/2BachelierVersion}.
\end{proof}

\begin{proof}[Proof of Corollary \ref{cor:sigma2OverS}]
	As explained before the statement of Corollary \ref{cor:sigma2OverS}, the conditions \eqref{eq:N=g(sigma^2,P,V,C)THM} and \eqref{eq:N=g(sigma_B^2,V,S)} are equivalent.
	Thus, it holds
	\begin{equation*}
		N^{3/2} = c\cdot \frac{\sigma_BV}{C}.
	\end{equation*}
	Since $C=SV/N$, the corollary follows.
\end{proof}
\begin{proof}[Proof of Proposition \ref{prop:1+Hlaw}]
	The proof is the same as that of Theorem \ref{thm:3/2law} except that in the present case the matrices $B$ and $a$ are given by
	\begin{align*}
		B = \left( \begin{array}{cccc}
		\,\,\,\, 0 &        -1 & \,\,\,\,1 & 0   \\
 		\,\,\,\, 0 & \,\,\,\,1 & \,\,\,\,0 & 1  \\
 		\,\,\,\,-2H& \,\,\,\,0 &        -1 & 0  \\
		\,\,\,\, 2 &        -1 & \,\,\,\,0 & 0  \\
		\end{array}
		\right) \quad\text{and}\quad a = \left( \begin{array}{c}
		\,\,\,\,\,0 \\ \,\,\,\,\,0 \\ -1 \\ \,\,\,\,\, 0
		\end{array} \right).
	\end{align*}
	The unique solution $y$ of the linear system $By = a$ is $y = 1/(1+H)\cdot(1/2,1,1,-1)^\top$.
	Applying Corollary \eqref{thm:pi,k=0} gives the desired result.
\end{proof}

\section{Integer part of Brownian motion}
\label{sec:RoundedBM}
With the notation from Section \ref{sec:VolaScaling}, we want to show that as $T\searrow 0$ $$ \mathbb{V}\text{ar} \left( \log(\check{P}_{t+T}) - \log(\check{P}_t) \right)  \approx c \sqrt{T},$$
for some constant $c>0$. Recall that $\left(\log(\check{P}_t)\right)_{t\geq 0}$ is given by  $$ \log(\check{P}_t) = \text{int}(W_t),$$ where $(W_t)_{t \geq 0}$ is a standard Brownian motion and $\text{int}(x)$ denotes the integer closest to the real number $x$, i.e., $\text{int}(x) = \sup \lbrace n \in \mathbb{Z}: n \leq x + 0.5 \rbrace$.
\vspace{5mm}

To present the idea in its simplest possible form, note that for fixed $t>0$, say $t=1$ and $T$ small, it is straightforward to verify that 
\begin{align*}
\left(\log(\check{P}_{t+T}) - \log(\check{P}_t) \right)^2 = \left(\text{int}(W_{t+T}) -\text{int}(W_t) \right)^2 = \begin{cases} 0 &\hspace*{-2mm}\mbox{with probability of order 1,} \\
 1&\hspace*{-2mm}\mbox{with probability of order } T^{1/2}, \\
 >\hspace*{-1mm}1 &\hspace*{-2mm}\mbox{with probability smaller than } T.
 \end{cases}
\end{align*}
So that $\mathbb{V}\text{ar} \left( \log(\check{P}_{t+T}) - \log(\check{P}_t) \right)$ is of order $T^{1/2}$, as $T\searrow 0$, rather than of the usual order $T$. In the above sketchy argument we used the fact that, for every $t>0$, $$\lim_{h \rightarrow 0} \frac{1}{h} \ \mathbb{P} \left( \min_{n\in \mathbb{Z}} \vert W_t - n \vert \leq h \right) \geq c,$$ for some constant $c>0$.
\vspace{5mm}

To furnish a more precise result, we make - contrary to our usual assumption $W_0=0$ - the assumption that the Brownian motion starts from a random variable $W_0$ which is uniformly distributed on $[-1/2,1/2]$. Then, we can formulate the following more quantitative result for fixed $t=0$.

\begin{proposition}
Assume that $W_0$ is uniformly distributed on $[-1/2,+1/2]$. Then, 
$$ \liminf_{T \rightarrow 0} \sqrt{\frac{\pi}{2T}} \  \mathbb{V}\text{\emph{ar}} \left( \log(\check{P}_{T}) - \log(\check{P}_0) \right) =0.$$
\end{proposition}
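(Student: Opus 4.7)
The plan is to carry out the asymptotic expansion of $\mathbb{V}\text{ar}(\text{int}(W_T))$ directly, and then check whether the resulting order of magnitude is compatible with the stated conclusion. First note that, since $W_0\in[-\tfrac12,\tfrac12)$ almost surely, $\text{int}(W_0)=0$ a.s., so $\log\check{P}_T-\log\check{P}_0=\text{int}(W_T)$. Writing $W_T=W_0+G$ with $G:=W_T-W_0\sim N(0,T)$ independent of $W_0\sim\text{Unif}[-\tfrac12,\tfrac12]$, the joint symmetry of the two factors yields $\mathbb{E}[\text{int}(W_T)]=0$, so $\mathbb{V}\text{ar}(\text{int}(W_T))=\sum_{n\in\mathbb{Z}}n^2\,p_n(T)$, with $p_n(T):=\mathbb{P}(W_T\in[n-\tfrac12,n+\tfrac12))$. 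The goal is to identify a sequence $T_k\downarrow 0$ along which this sum is $o(\sqrt{T_k})$.

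The central calculation is the asymptotics of $p_{\pm 1}(T)$. By Fubini, $\mathbb{P}(W_T\geq\tfrac12)=\int_{-1/2}^{1/2}\Phi(-(1/2-u)/\sqrt{T})\,du$; substituting $w=(1/2-u)/\sqrt{T}$ gives $\sqrt{T}\int_0^{1/\sqrt{T}}\Phi(-w)\,dw$. Since $\int_0^\infty\Phi(-w)\,dw=\phi(0)=1/\sqrt{2\pi}$ and the omitted tail beyond $w=1/\sqrt{T}$ is super-exponentially small in $1/T$, one obtains $\mathbb{P}(W_T\geq\tfrac12)=\sqrt{T/(2\pi)}(1+o(1))$. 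The correction $\mathbb{P}(W_T\geq\tfrac32)$ is dominated by $\mathbb{P}(|G|\geq 1)=O(e^{-1/(2T)})$ and is negligible; by symmetry $p_{-1}(T)=p_1(T)=\sqrt{T/(2\pi)}(1+o(1))$, while the higher tail $\sum_{|n|\geq 2}n^2 p_n(T)=O(e^{-c/T})$ for some $c>0$. Combining, $\mathbb{V}\text{ar}(\text{int}(W_T))=\sqrt{2T/\pi}\,(1+o(1))$, \emph{uniformly} as $T\downarrow 0$.

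The main obstacle is precisely this uniformity: the two-sided bound $p_1(T)+p_{-1}(T)\geq\sqrt{2T/\pi}(1+o(1))$ forces $\sqrt{\pi/(2T)}\,\mathbb{V}\text{ar}(\text{int}(W_T))\geq 1+o(1)$ for \emph{every} sufficiently small $T$, ruling out any subsequence along which the normalized variance vanishes. Consequently the route above in fact proves $\lim_{T\downarrow 0}\sqrt{\pi/(2T)}\,\mathbb{V}\text{ar}(\log\check{P}_T-\log\check{P}_0)=1$ rather than the announced value $0$. My reading is that the right-hand side $0$ in the statement is a typographical slip for a positive constant (the computation identifies it as $1$), compatible with the body's heuristic $\check{\sigma}^2\approx c\sqrt{T}$; with that correction the argument above supplies a complete proof. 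If on the other hand the claim $\liminf=0$ is taken literally with the prefactor $\sqrt{\pi/(2T)}$ as written, then the uniform lower bound obstructs every subsequential argument, and I do not see a route to establish the proposition as stated.
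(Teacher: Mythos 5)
Your computation is correct, and so is your diagnosis: the proposition as printed is inconsistent with the paper's own proof. The paper integrates $\mathbb{E}\left[\left(\text{int}(B_T+x)\right)^2\right]$ over $x\in[-1/2,1/2]$, rearranges the resulting series (by two summations by parts) into $\sqrt{2T/\pi}\left(1+2\sum_{i\geq 1}e^{-i^2/(2T)}\right)-4\sum_{i\geq1}i\,\Phi\left(-i/\sqrt{T}\right)$, and then invokes the Mills-ratio asymptotic $\Phi(-x)\approx\phi(x)/x$ to see that the two correction sums cancel at leading order; what remains is $\mathbb{V}\text{ar}\left(\log\check{P}_T-\log\check{P}_0\right)=\sqrt{2T/\pi}\,\left(1+O(e^{-1/(2T)})\right)$, i.e.\ precisely your conclusion that $\sqrt{\pi/(2T)}\,\mathbb{V}\text{ar}(\cdot)\to 1$. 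So the ``$=0$'' in the displayed liminf cannot be meant literally; it is a slip for ``$=1$'' (or for the assertion that the normalized variance minus $1$ tends to $0$), and the corrected version is what both you and the paper actually establish, consistent with the body text's claim $\check{\sigma}^2\approx c\sqrt{T}$ with $c=\sqrt{2/\pi}$. Your uniform lower bound via $p_1(T)+p_{-1}(T)\geq\sqrt{2T/\pi}(1+o(1))$ correctly rules out any subsequence realizing a liminf of $0$, so the literal statement is indeed not provable. Methodologically your route is a touch more elementary and arguably cleaner: you compute the cell probabilities $p_{\pm1}(T)\sim\sqrt{T/(2\pi)}$ directly from $\int_0^\infty\Phi(-w)\,dw=1/\sqrt{2\pi}$ and crush the tail $\sum_{|n|\geq2}n^2p_n(T)=O(e^{-c/T})$, avoiding the paper's series manipulation and its only-heuristic appeal to the Mills ratio, whereas the paper's identity has the minor advantage of exhibiting the exact form of the exponentially small corrections. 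The one step you should spell out is that $\mathbb{P}(W_T\geq 3/2)\leq\mathbb{P}(G\geq 1)$ because $|W_0|\leq 1/2$; with that, your argument is a complete proof of the corrected statement.
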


\begin{proof}
Note that $$ \mathbb{V}\text{ar} \left( \log(\check{P}_{T}) - \log(\check{P}_0) \right) = \mathbb{E} \left[ \left( \log(\check{P}_{T}) - \log(\check{P}_0) \right)^2 \right], $$
where  $\log(\check{P}_{0})$ is in fact zero as we assumed that $W_0\sim$ Uni$(1/2,1/2)$. In the following $(B_t)_{t\geq 0}$ denotes a standard Brownian motion starting at $B_0=0$ such that $W_T = B_T + W_0$. Then,
\begin{align*}
\mathbb{E} &\left( \left( \log(\check{P}_{T}) - \log(\check{P}_0) \right)^2 \right)= \int_{-0.5}^{0.5} \mathbb{E}\left( \left( \text{int}(B_T + x) \right)^2 \right) dx \\
&=\int_{-0.5}^{0.5} \sum_{i=1}^\infty i^2 \left( \mathbb{P}\left(\frac{2i-1}{2}-x \leq B_T \leq \frac{2i+1}{2} - x \right) \right.\\
& \hspace*{25mm} \left.+  \mathbb{P}\left(-\frac{2i+1}{2}-x \leq B_T \leq -\frac{2i-1}{2} - x \right)\right) dx \\
&=\int_{-0.5}^{0.5} \sum_{i=1}^\infty i^2 \left( \Phi\left(\frac{i+0.5-x}{\sqrt{T}}\right) -  \Phi\left(\frac{i-0.5-x}{\sqrt{T}}\right) \right.\\
& \hspace*{25mm} \left.+  \Phi\left(\frac{i+0.5+x}{\sqrt{T}}\right) -  \Phi\left(\frac{i-0.5+x}{\sqrt{T}}\right) \right) dx  \\
&= \sum_{i=1}^\infty i^2  \left( \sqrt{\frac{2T}{\pi}} \left( \exp\left(-\frac{(i+1)^2}{2T}\right) + \exp\left(-\frac{(i-1)^2}{2T}\right) - 2 \exp\left(-\frac{i^2}{2T} \right) \right) \right. \\
& \hspace{25mm}+ \left. (2i+2) \Phi\left(\frac{i+1}{\sqrt{T}}\right) + (2i-2) \Phi\left(\frac{i-1}{\sqrt{T}}\right) - 4i  \Phi\left(\frac{i}{\sqrt{T}}\right) \right) \\
&= \sqrt{\frac{2 T}{\pi}} \left( 1 + 2 \sum_{i=1}^\infty \exp\left(-\frac{i^2}{2 T} \right) \right) - 4  \sum_{i=1}^\infty  i \Phi \left( -\frac{i}{\sqrt{T}} \right) \\
\end{align*}
We now use that fact for $x\rightarrow \infty$, $\Phi(-x) \approx \phi(x)/x$, where $\phi(x) = \exp(-x^2/2)/\sqrt{2\pi}$ is the probability density function of the standard normal distribution (we thank Friedrich Hubalek for pointing this out to us).
It follows that for small $T$
$$ i \Phi \left( -\frac{i}{\sqrt{T}} \right) \approx \sqrt{\frac{T}{2\pi}} \exp \left( -\frac{i^2}{2T} \right),
$$
which concludes the proof.
\end{proof}

\section*{Acknowledgements}

We thank Jean-Philippe Bouchaud, Rama Cont, Friedrich Hubalek and particularly Mathieu Rosenbaum for helpful comments as well as interesting discussions.
 
\bibliographystyle{abbrv}

\end{document}